\theoremstyle{plain}
\newtheorem{theorem}{Theorem}[section]      % セクションごとに番号を振る (例: Theorem 1.1)
\newtheorem{proposition}[theorem]{Proposition} % Theoremと番号を共有する
\theoremstyle{definition}
\newtheorem{requirement}{Requirement}          % requirementを使いたい場合
\theoremstyle{remark}
\begin{document}

\title{Quantum-Circuit Framework for Two-Stage Stochastic Programming via QAOA Integrated with a Quantum Generative Neural Network}
%A Variational Quantum-Circuit Framework for Two-Stage Stochastic Programming via qGAN–QAOA

%A Quantum-Circuit Framework for Two-Stage Stochastic Programming via qGAN–QAOA

%qGAN–QAOA: A Quantum-Circuit Framework for Two-Stage Stochastic Programming

%Two-Stage Stochastic Programming with qGAN–QAOA and Walsh–Hadamard Sparse Diagonal Encoding

% Quantum-Circuit-Based Two-stage stochastic programming Integrated with a Quantum Probability Distribution Generator

\author{Taihei Kuroiwa}
 \email{k2443003@gl.cc.uec.ac.jp}
 \affiliation{Engineering Department of the University of
Electro-Communications, 182-8585, Chofu, Tokyo, Japan}
 \affiliation{Grid Inc., 107-0061, Tokyo, Japan}

\author{Daiki Yamazaki}
 \affiliation{Engineering Department of the University of
Electro-Communications, 182-8585, Chofu, Tokyo, Japan}

\author{Keita Takahashi}
 % \email{}
 \affiliation{Grid Inc., 107-0061, Tokyo, Japan}

\author{Kodai Shiba}
 % \email{}
 \affiliation{Grid Inc., 107-0061, Tokyo, Japan}

\author{Chih-Chieh Chen}
 % \email{}
 \affiliation{Grid Inc., 107-0061, Tokyo, Japan}

\author{Tomah Sogabe}
\email[]{sogabe@uec.ac.jp}
\affiliation{Engineering Department of the University of
Electro-Communications, 182-8585, Chofu, Tokyo, Japan}
\affiliation{Grid Inc., 107-0061, Tokyo, Japan}
\affiliation{i-PERC, The University of Electro-Communications, 182-8585, Chofu, Tokyo, Japan}

\date{\today}

\begin{abstract}

Two-stage stochastic programming often discretizes uncertainty into scenarios, but scenario enumeration makes expected recourse evaluation scale at least linearly in the scenario count. We propose \emph{qGAN-QAOA}, a unified quantum-circuit workflow in which a pre-trained quantum generative adversarial network encodes the scenario distribution and QAOA optimizes first-stage decisions by minimizing the full two-stage objective, including expected recourse cost. With the qGAN parameters fixed after training, we evaluate the objective as the expectation value of a problem Hamiltonian and optimize only the QAOA variational parameters. We interpret non-anticipativity as a condition on measurement outcome statistics and prove that the first-stage measurement marginal is independent of the scenario. For uniformly discretized uncertainty, the diagonal operator encoding the uncertainty admits a sparse Pauli-$Z$ expansion via the Walsh--Hadamard transform, yielding polylogarithmic scaling of gate count and circuit depth with the number of scenarios.
Numerical experiments on the stochastic unit commitment problem (UCP) with photovoltaic (PV) uncertainty compare the expected cost of the proposed method with classical expected-value and two-stage stochastic programming baselines, demonstrating the effectiveness of qGAN-QAOA as a two-stage decision model.

\end{abstract}

\maketitle

\section{Introduction}
\label{sec:introduction}

In planning problems under uncertainty, forecast errors can necessitate additional adjustments and measures to avoid constraint violations, which may increase the operating cost of real systems. It is therefore important to model uncertainty as a probability distribution and to optimize decisions while accounting for the trade-off between risk and cost. A representative framework for addressing such problems is stochastic programming (SP) \cite{Dantzig_sp,shapiro2009lectures}, which has been widely applied to supply-chain design \cite{santoso2005_scnd}, financial asset allocation \cite{edirisinghe2007_mspo,consiglio2010_pfo}, and energy operations (e.g., stochastic unit commitment problem and virtual power plants) \cite{anjos2017_uc,emarati2020_vpp}. In particular, two-stage stochastic programming is a standard formulation in which a first-stage decision is made before observing uncertainty, and a second-stage recourse (adjustment) is optimized according to the realized uncertainty \cite{Dantzig_sp,shapiro2009lectures}.

In practice, uncertainty often arises in the form of continuous or multimodal distributions. However, in real-world applications the underlying population distribution is rarely known, and only observational data are typically available. In such cases, to treat uncertainty explicitly as a probability distribution, a common approach is to learn the distribution with generative models such as GANs and to generate samples of the uncertain quantity from a trained generator \cite{goodfellow2014gan}. In two-stage stochastic programming, it is standard to construct a scenario set based on the generated samples and to solve a discrete scenario-based problem via the sample average approximation (SAA), which approximates the expected recourse cost by a sample average \cite{shapiro2009lectures,kleywegt2002saa}.

In the SAA approach, it is known that the number of scenarios on the order of $N=O(\varepsilon^{-2})$ is required to reduce the approximation error of the expectation to $O(\varepsilon)$ \cite{shapiro2009lectures,kleywegt2002saa}. Furthermore, in the L-shaped method based on Benders decomposition, a representative algorithm for two-stage stochastic programming, the second-stage problem is solved for all scenarios at each iteration and the results are aggregated to update the master problem. Let $T_{\mathrm{2nd}}$ denote the computational cost of solving the second-stage problem for a single scenario; then the per-iteration computational burden increases at least as $O(NT_{\mathrm{2nd}})$ \cite{BENDERS,L-Shaped}. As a result, increasing the number of scenarios to improve accuracy becomes a practical bottleneck in both computation time and memory. This motivates the need for computational methods that can accurately treat uncertainty as a distribution while alleviating explicit dependence on the scenario count $N$.

In recent years, efforts to apply quantum computing to two-stage stochastic programming have been gaining momentum \cite{rotello2024_expected_value, mussig2025, xu2025_hybrid_qc_sp}. Rotello \textit{et al.} \cite{rotello2024_expected_value} proposed a framework in which candidate first-stage solutions are provided classically, while the expected recourse cost is evaluated by a quantum circuit by treating the second-stage problem in scenario-parallel fashion on a scenario register. In addition, hybrid frameworks have been reported in which quantum annealing is used for the first-stage combinatorial optimization and the second stage is handled by classical optimization \cite{xu2025_hybrid_qc_sp}. While these approaches incorporate quantum computation mainly for expectation evaluation or partial optimization, they do not treat the entire two-stage model in a unified manner as the optimization of a single variational quantum circuit.

In this work, we propose \emph{qGAN-QAOA}, a unified quantum-circuit workflow in which a pre-trained quantum generative adversarial network (qGAN) encodes the scenario distribution \cite{Lloyd2018QGAN,Dallaire2018QuGAN,Zoufal_2019_qGAN,Certo_2023_CqGAN,Tang2022_multi_qAGN}, and the quantum approximate optimization algorithm (QAOA) optimizes first-stage decisions by minimizing the two-stage objective, including the expected recourse cost \cite{farhi2014_QAOA,mcgeoch2014adiabatic,zhou2020quantum}. With the qGAN parameters fixed after training, the objective is evaluated as the expectation value of a problem Hamiltonian, and only the QAOA variational parameters are optimized. In two-stage stochastic programming, the non-anticipativity constraints require that first-stage decision variables be scenario-independent (see, e.g., \cite{Dantzig_sp,shapiro2009lectures}). However, to the best of our knowledge, existing quantum approaches have not explicitly discussed how to enforce or interpret these non-anticipativity constraints. Accordingly, we formulate the non-anticipativity constraints in terms of measurement outcome statistics and show that they hold because the marginal distribution of measurement outcomes on the first-stage register is independent of the scenario. Furthermore, by uniformly discretizing a continuous uncertainty and exploiting the Walsh--Hadamard transform (WHT), we propose a method to represent the random-variable operator $\hat{\xi}$---defined as a diagonal operator on the uncertainty register---via a sparse Pauli-$Z$ expansion.

Here, representations and circuit synthesis of diagonal operators based on the WHT/Walsh series have been systematized in prior work \cite{Welch2014DiagonalWalsh}, and efficient computation of Pauli-expansion coefficients via the WHT has also been reported \cite{Georges2025PauliFWHT}.
In this paper, leveraging sparse WHT coefficients of $\hat{\xi}$ realized by uniform discretization, we show that, for the number of scenarios $N$, the number of Pauli-$Z$ terms arising from $\hat{\xi}$ can be bounded by $O(\mathrm{poly}(\log N))$, thereby theoretically clarifying, from the viewpoint of dependence on $N$, that the gate count and circuit depth can scale polylogarithmically.
(although the overall circuit size may also depend on the number of terms associated with decision variables and compilation-related factors).

Our main contributions of this paper are summarized as follows.\\
(i) We propose qGAN-QAOA, a unified quantum-circuit workflow that combines scenario-distribution encoding by a pre-trained qGAN with variational optimization by QAOA, and evaluates the two-stage objective including expected recourse cost as a problem-Hamiltonian expectation value.\\
(ii) We formulate the non-anticipativity constraints in terms of measurement outcome statistics and show that they hold because the marginal distribution of measurement outcomes on the first-stage register is independent of the scenario in the proposed method.\\
(iii) Under uniform discretization of a continuous uncertainty, we show that the random-variable operator $\hat{\xi}$ (a diagonal operator) admits a sparse Pauli-$Z$ expansion based on the Walsh--Hadamard transform (WHT), and that the number of Pauli-$Z$ terms arising from $\hat{\xi}$ can be bounded by $O(\mathrm{poly}(\log N))$, thereby reducing the dependence on the scenario count $N$ induced by $\hat{\xi}$.\\
(iv) We organize the computational complexity of the proposed two-stage method in terms of a target accuracy $O(\varepsilon)$ and, by comparison with classical two-stage stochastic programming based on SAA, provide conditions under which the proposed method can be advantageous from the viewpoint of dominant terms as $\varepsilon\to 0$.\\
(v) As a case study of the stochastic unit commitment problem (UCP) with uncertainty in photovoltaic (PV) output, we evaluate the expected cost of the proposed method and compare it with the classical expected-value method and the two-stage stochastic programming approach, thereby demonstrating the effectiveness of qGAN-QAOA as a two-stage decision model.

The remainder of this paper is organized as follows.
Section~\ref{sec:two_stage_sp} reviews the general formulation of two-stage stochastic programming and summarizes scenario-set discretization and the associated computational challenges.
Section~\ref{sec:method} describes the qGAN-based scenario-generation circuit and the quantum-circuit implementation of two-stage stochastic programming via qGAN-QAOA.
Sections~\ref{sec:problem_setting} and~\ref{sec:result} present the formulation of the stochastic unit commitment problem with uncertainty in PV output and the corresponding numerical results.
Section~\ref{sec:discussion} discusses the WHT-based Pauli expansion and circuit scaling, and also analyzes the conditions under which the proposed method can be advantageous in computational complexity.
Finally, Section~\ref{sec:conclusion} concludes the paper and outlines future work.

\section{Two-stage stochastic programming}
\label{sec:two_stage_sp}

In this section, we formulate two-stage stochastic programming, which provides the basis for the problem considered in this study, in a general mathematical form. We also summarize the discretization error introduced by scenario-based approximation using SAA and the computational bottleneck of the L-shaped method based on Benders decomposition \cite{L-Shaped,BENDERS}. Throughout this paper, scalar quantities are denoted by non-bold symbols, whereas vector quantities (decision variables and coefficient vectors) are denoted by bold symbols (e.g., $\boldsymbol{x},\boldsymbol{y},\boldsymbol{c}$).

\subsection{General formulation of two-stage stochastic programming}
\label{subsec:General_formulation_two_stage_sp}

As uncertainty, we consider a one-dimensional random variable $\xi:\Omega \to \mathbb{R}$ defined on a probability space $(\Omega,\mathcal{F},\mathbb{P})$. The decision maker chooses the first-stage decision variable $\boldsymbol{x}\in\mathbb{R}^{n}$ only once before observing a realization of $\xi$, and then adjusts the second-stage decision variable $\boldsymbol{y}(\xi)\in\mathbb{R}^{m}$ according to the observed realization $\xi$. In two-stage stochastic programming, the non-anticipativity constraints---namely, the requirement that the first-stage decision variable $\boldsymbol{x}$ be scenario-independent---are imposed \cite{Dantzig_sp,shapiro2009lectures,birge2011intro}.

Let the first-stage cost be defined as $f(\boldsymbol{x}):=\boldsymbol{c}^{\mathsf{T}}\boldsymbol{x}$, and the second-stage recourse cost as $\boldsymbol{q}(\xi)^{\mathsf{T}}\boldsymbol{y}(\xi)$. Here, $\boldsymbol{c}\in\mathbb{R}^{n}$ is the first-stage cost coefficient vector, and $\boldsymbol{q}(\xi)\in\mathbb{R}^{m}$ is the second-stage cost coefficient that depends on the uncertainty $\xi$. Suppose further that the second-stage constraints are given by
\begin{equation}
  W(\xi)\,\boldsymbol{y}
    = \boldsymbol{h}(\xi) - T(\xi)\,\boldsymbol{x},
  \qquad
  \boldsymbol{y} \ge \boldsymbol{0}
  \label{eq:second_stage_constraints}
\end{equation}
where $W(\xi)$ and $T(\xi)$ are matrices of appropriate dimensions and $\boldsymbol{h}(\xi)$ is a vector function; all of these represent problem parameters that depend on $\xi$.

Given a first-stage solution $\boldsymbol{x}$ and a realization $\xi$, we define the optimal second-stage recourse function $Q(\boldsymbol{x},\xi)$ as
\begin{equation}\label{eq:second_stage_general}
\begin{aligned}
  Q(\boldsymbol{x},\xi)
  :=
  \min_{\boldsymbol{y} \in \mathbb{R}^m}
  \left\{
    \boldsymbol{q}(\xi)^\mathsf{T}\boldsymbol{y}
    \;\middle|\;
    W(\xi)\,\boldsymbol{y}
    = \boldsymbol{h}(\xi) - T(\xi)\,\boldsymbol{x},\;
    \boldsymbol{y} \ge \boldsymbol{0}
  \right\}.
\end{aligned}
\end{equation}
Using the recourse function, the two-stage stochastic programming problem is generally written as
\begin{equation}\label{eq:two_stage_sp_general_obj}
\begin{aligned}
  \min_{\boldsymbol{x}\in X}
  \;& \boldsymbol{c}^\mathsf{T}\boldsymbol{x}
       + \mathbb{E}_{\xi}\!\left[
           Q(\boldsymbol{x},\xi)
         \right],\\
  \text{s.t.}\;\;
  & A\boldsymbol{x} \ge \boldsymbol{b},
\end{aligned}
\end{equation}
where $X := \{\boldsymbol{x}\in\mathbb{R}^n \mid A\boldsymbol{x}\ge \boldsymbol{b}\}$ denotes the convex feasible set of the first stage, and $\mathbb{E}_{\xi}[\cdot]$ denotes the expectation with respect to $\xi$.

\subsection{Scenario discretization and computational bottlenecks}
\label{subsec:scenario_discretization}

In practice, it is often difficult to explicitly identify the population distribution of the uncertainty $\xi$. In such cases, one typically relies on samples of $\xi$---e.g., observational data and, when needed, samples generated by a learned generative model---and approximates the expectation $\mathbb{E}_{\xi}[Q(\boldsymbol{x},\xi)]$ by a sample average within the SAA framework \cite{shapiro2009lectures,birge2011intro}. 
%In this paper, when a generative model is employed, we assume that it approximates the target distribution with sufficient accuracy and hence does not constitute a dominant source of error in expectation evaluation.
Accordingly, we generate i.i.d.\ samples $\xi_0,\dots,\xi_{N-1}$ 
%from the generative model
and approximate the expectation by SAA as
\begin{equation}
  \mathbb{E}_{\xi}\!\left[Q(\boldsymbol{x},\xi)\right]
  \approx
  \frac{1}{N}\sum_{s=0}^{N-1} Q(\boldsymbol{x},\xi_s)
  \label{eq:two_stage_sp_discrete}
\end{equation}
Under the standard assumption that $Q(\boldsymbol{x},\xi)$ has finite variance, the statistical error of the sample average scales as
\begin{equation}
  \varepsilon_{\mathrm{est}} = O(N^{-1/2})
  \quad\Rightarrow\quad
  N = O(\varepsilon_{\mathrm{est}}^{-2})
  \label{eq:mc_rate}
\end{equation}
Therefore, evaluating the expected recourse cost with accuracy $\varepsilon_{\mathrm{est}}$ requires $N=O(\varepsilon_{\mathrm{est}}^{-2})$ scenarios.

On the other hand, as $N$ increases, evaluating the approximation in Eq.~\eqref{eq:two_stage_sp_discrete} generally requires solving the second-stage problem~\eqref{eq:second_stage_general} for each scenario $s$ and aggregating the results. In the L-shaped method based on Benders decomposition \cite{BENDERS,L-Shaped}, if the computational cost of solving the second-stage problem once is $T_{\mathrm{2nd}}$, then the computational complexity per iteration is lower bounded by
$O\bigl(N\,T_{\mathrm{2nd}}\bigr)$.
Consequently, increasing the number of scenarios $N$ to reduce the statistical error directly leads to higher computational complexity in two-stage stochastic programming \cite{shapiro2009lectures,birge2011intro}.

\subsection{Motivation and positioning}
\label{subsec:Motivation_positioning}
As discussed above, classical SAA-based two-stage stochastic programming evaluates the expected recourse cost by scenario enumeration, resulting in computational bottlenecks that scale at least linearly with the number of scenarios $N$ (e.g., $O(NT_{\mathrm{2nd}})$ per iteration in the L-shaped method). In this paper, we aim to mitigate the explicit dependence on $N$ by developing qGAN-QAOA, a unified quantum-circuit workflow that encodes the scenario distribution into a quantum state and evaluates the two-stage objective including expected recourse cost as the expectation value of a problem Hamiltonian, optimizing only the QAOA variational parameters.

\section{Method}
\label{sec:method}

\begin{sidewaysfigure*}[p] % 2段組なら * を推奨（単段なら sidewaysfigure でもOK）
\centering
\includegraphics[width=\textheight,height=\textwidth,keepaspectratio]{./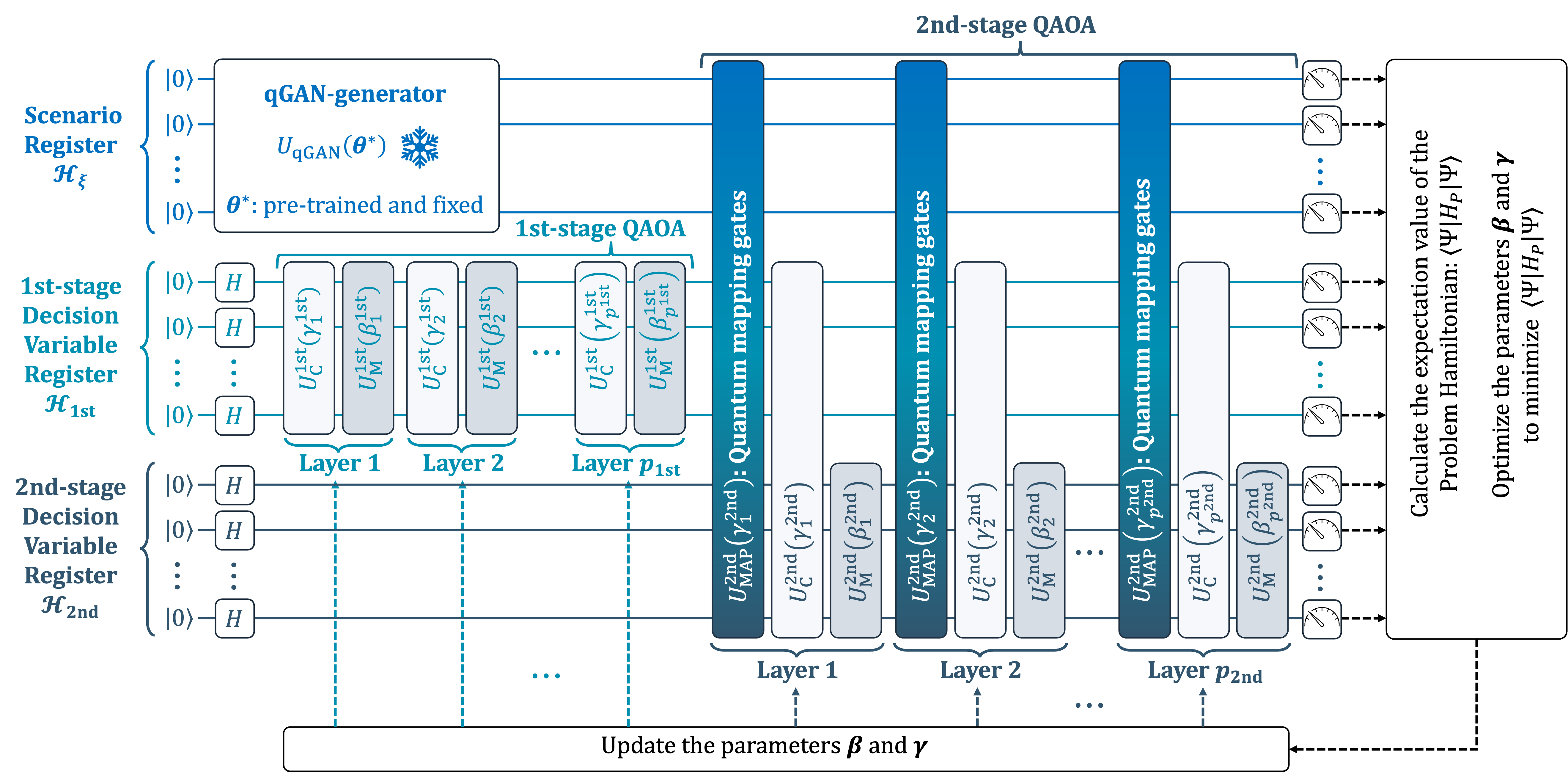}
\caption{
qGAN-QAOA workflow for two-stage stochastic programming.
A pre-trained and fixed qGAN generator $U_{\mathrm{qGAN}}(\boldsymbol{\theta}^{\ast})$ prepares on the scenario register $\mathcal{H}_{\xi}$ a state whose amplitudes encode the scenario distribution,
$|\psi_{\xi}\rangle=U_{\mathrm{qGAN}}(\boldsymbol{\theta}^{\ast})|0\rangle^{\otimes n_{\xi}}$.
In the first-stage QAOA with $p_{\mathrm{1st}}$ layers, the cost unitary $U_{C}^{\mathrm{1st}}(\boldsymbol{\gamma}^{\mathrm{1st}})$ and the mixer unitary $U_{M}^{\mathrm{1st}}(\boldsymbol{\beta}^{\mathrm{1st}})$ are applied alternately to the first-stage register $\mathcal{H}_{\mathrm{1st}}$.
In the second-stage QAOA with $p_{\mathrm{2nd}}$ layers, a quantum mapping gate $U_{\mathrm{MAP}}(\gamma_{\ell}^{\mathrm{2nd}})$ is inserted at the beginning of each layer to embed scenario-conditioned second-stage cost terms into the second-stage register $\mathcal{H}_{\mathrm{2nd}}$, followed by the cost unitary $U_{C}^{\mathrm{2nd}}(\boldsymbol{\gamma}^{\mathrm{2nd}})$ and the mixer unitary $U_{M}^{\mathrm{2nd}}(\boldsymbol{\beta}^{\mathrm{2nd}})$.
Finally, all registers are measured to estimate the problem-Hamiltonian expectation value $\langle H_{P}\rangle$, and the parameters
$\{\boldsymbol{\beta}^{\mathrm{1st}},\boldsymbol{\gamma}^{\mathrm{1st}},\boldsymbol{\beta}^{\mathrm{2nd}},\boldsymbol{\gamma}^{\mathrm{2nd}}\}$
are updated via classical optimization (dashed lines indicate classical feedback).
}
\label{fig:qGAN_QAOA_flow}
\end{sidewaysfigure*}
In this section, we propose qGAN-QAOA, a method for implementing the discrete-scenario two-stage stochastic programming problem defined in Sec.~\ref{sec:two_stage_sp} as a quantum circuit-based framework, as illustrated in Fig.~\ref{fig:qGAN_QAOA_flow}. 
The proposed method uses three quantum registers,
$\mathcal{H}=\mathcal{H}_{2\mathrm{nd}}\otimes\mathcal{H}_{1\mathrm{st}}\otimes\mathcal{H}_{\xi}$.
The register $\mathcal{H}_{\xi}$ is the scenario register, whose amplitudes encode the scenario(uncertainty) distribution $\{p_s\}$. The registers $\mathcal{H}_{1\mathrm{st}}$ and $\mathcal{H}_{2\mathrm{nd}}$ are the first-stage and second-stage decision-variable registers, respectively, and are optimized variationally via QAOA.
The key ideas of the proposed method are summarized as follows: (i) preparing the uncertainty distribution as a quantum state, and (ii) constructing a gate (a quantum mapping gate) that incorporates scenario information into the second-stage cost. 
These enable the two-stage objective including expected recourse cost to be evaluated as the expectation value of a problem Hamiltonian and minimized within a unified quantum-circuit workflow.

\subsection{Encoding the scenario distribution using qGAN}
\label{subsec:Quantization_RV}

\begin{figure}[t] 
\centering \includegraphics[width=\linewidth, keepaspectratio]{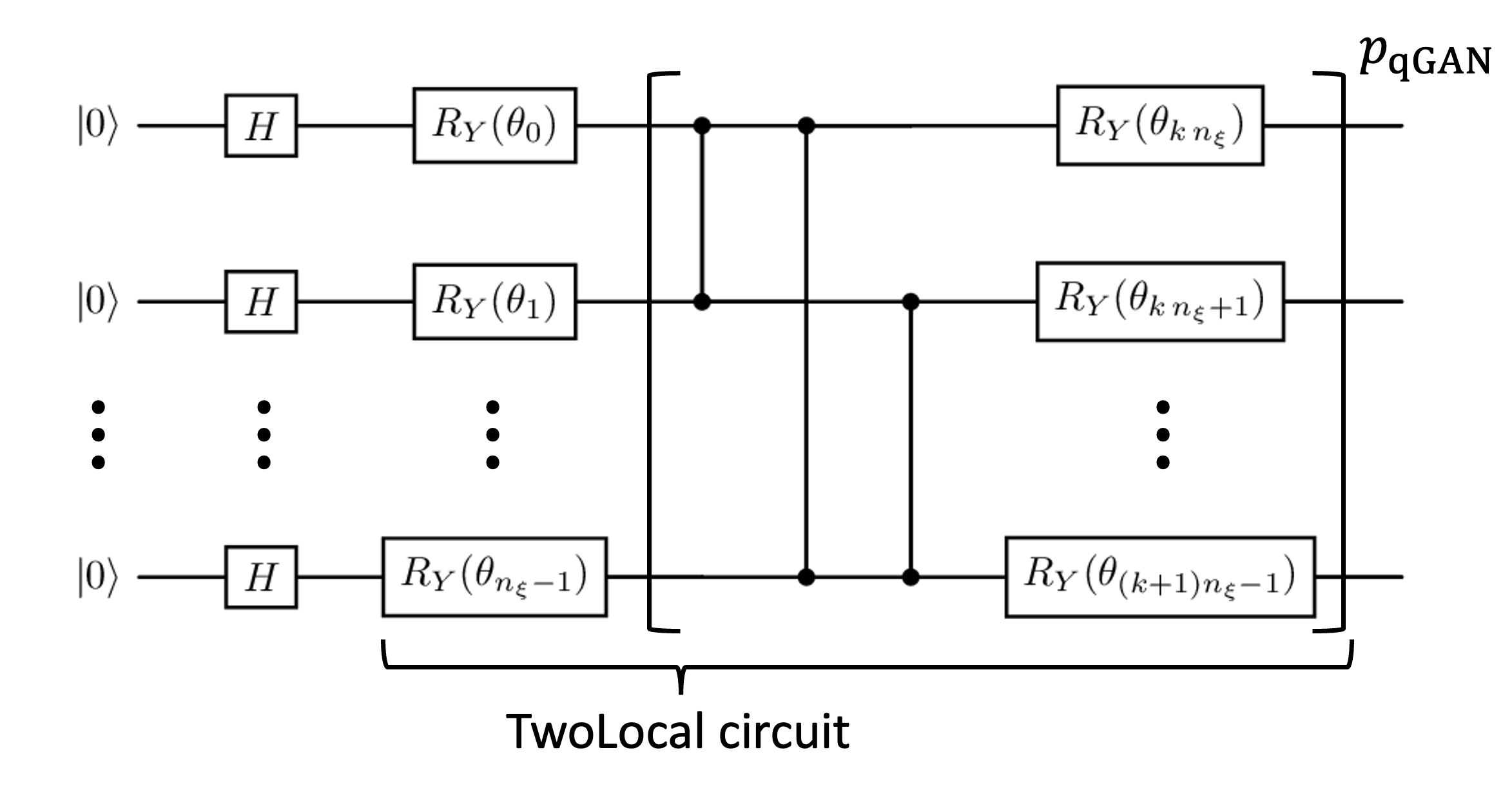} 
\caption{
Quantum generator circuit $U_{\mathrm{qGAN}}(\boldsymbol{\theta})$ (TwoLocal circuit) used in qGAN.
After applying an $H$ gate and a parameterized rotation $R_y(\theta)$ to each qubit, we alternate an entangling layer (a sequence of controlled gates) with an $R_y(\theta)$ layer.
By stacking these layers, the circuit generates a quantum state whose amplitudes represent the scenario distribution parameterized by $\boldsymbol{\theta}$.
}
\label{fig:qGAN_qc} 
\end{figure}

\begin{figure}[t] 
\centering 
\includegraphics[width=\linewidth, keepaspectratio]{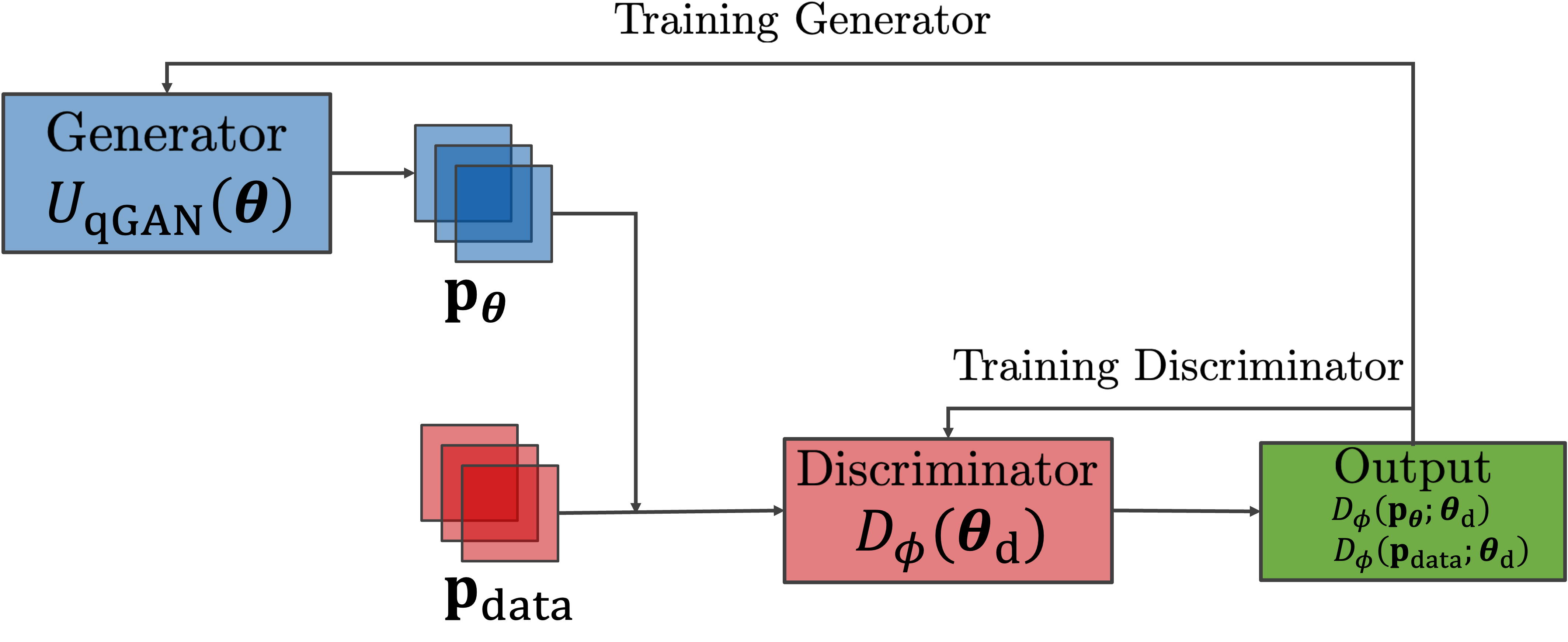} 
\caption{
Training procedure of qGAN.
The quantum generator produces a probability distribution via the parameterized circuit $U_{\mathrm{qGAN}}(\boldsymbol{\theta})$, yielding the generated probability vector $\mathbf{p}_{\boldsymbol{\theta}}$.
The classical discriminator $D_{\boldsymbol{\phi}}$ takes as input either the target distribution $\mathbf{p}_{\mathrm{data}}$ or the generated distribution $\mathbf{p}_{\boldsymbol{\theta}}$ and outputs the probability that the input originates from real data.
Gradients of the loss are propagated to update the parameters $\bm{\theta},\bm{\theta}_{\mathrm{d}}$ iteratively.
} 
\label{fig:qGAN_train_flow} 
\end{figure} 

In this paper, we treat the uncertainty $\xi$ as a one-dimensional continuous random variable on an interval $[\xi_{\min},\xi_{\max}]$. In the proposed method, we learn a discrete distribution of $\xi$ from observational data and prepare a scenario-superposition state that follows the learned distribution using a quantum generative adversarial network (qGAN)\cite{Zoufal_2019_qGAN}. Specifically, we employ a parameterized quantum circuit $U_{\mathrm{qGAN}}(\boldsymbol{\theta})$ acting on $n_\xi$ qubits as the quantum generator. In this paper, we adopt the TwoLocal circuit shown in Fig.~\ref{fig:qGAN_qc}, which consists of rotation gates $R_y$ and entanglement gates $CZ$ (controlled-$Z$). Moreover, to increase the expressive power with the number of qubits $n_\xi$ while keeping the gate count of the qGAN circuit polynomial in $n_\xi$, we set the repetition number to $p_{\mathrm{qGAN}}=n_{\xi}$.
The discriminator is implemented as a classical neural network and is defined as a mapping
$D_{\boldsymbol{\phi}}:\ \mathbb{R}^N \to (0,1)$
which outputs the probability that its input distribution originates from real data. Specifically, the input is either the probability vector
$\mathbf{p}_{\boldsymbol{\theta}}=(p_0,\dots,p_{N-1})$
estimated from measurement frequencies of the quantum generator, or the target distribution
$\mathbf{p}_{\mathrm{data}}=(p^{\mathrm{data}}_0,\dots,p^{\mathrm{data}}_{N-1})$
constructed from observational data. The value $D_{\boldsymbol{\phi}}(\mathbf{p})$ thus represents the probability that $\mathbf{p}$ is classified as a real-data distribution (Fig.~\ref{fig:qGAN_train_flow}). Training is performed via adversarial learning by alternately updating the parameters of the generator and the discriminator.

On the other hand, since the proposed method encodes a probability distribution as amplitudes on the scenario register, it is necessary to represent $\xi$ as a discrete distribution supported on finitely many representative points. We therefore define the scenario set as
\begin{equation}\label{eq:scenarios_set}
  S := \{\xi_0,\dots,\xi_{N-1}\}
\end{equation}
and denote by $p_s$ the probability of each $\xi_s\in S$, assuming $\sum_{s=0}^{N-1}p_s=1$. As a standard discretization scheme, we use uniform discretization,
\begin{equation}\label{eq:disc_rv}
  \xi_s = \xi_{\min} + s\,\Delta\xi,
  \qquad
  \Delta\xi = \frac{\xi_{\max}-\xi_{\min}}{N-1}
\end{equation}
In this case, if $Q(\boldsymbol{x},\xi)$ is sufficiently smooth with respect to $\xi$ and the probabilities $p_s$ are chosen according to a standard density-based quadrature rule, the error of a simple quadrature approximation can be bounded as \cite{atkinson1989numana}
\begin{equation}\label{eq:disc_error}
  \varepsilon_{\mathrm{disc}} = O(\Delta\xi)=O(1/N)
\end{equation}

The main reason we adopt uniform discretization in this paper is that, as shown in Sec.~\ref{subsec:WHT_Pauli}, the random-variable operator on the scenario register becomes sparse under the Walsh--Hadamard transform, and this structure can be exploited in circuit design. Owing to this structure, the scaling of circuit resources can be suppressed even as the number of scenarios $N$ increases.

For the target distribution $\{p_s\}_{s=0}^{N-1}$ determined by the above discretization, we obtain the optimized parameters $\boldsymbol{\theta}^\ast$ through adversarial training of qGAN, which yields
\begin{equation}\label{eq:qGAN_state}
\begin{aligned}
  \ket{\psi^{\mathrm{qGAN}}_\xi}
  =
  U_{\mathrm{qGAN}}(\boldsymbol{\theta}^\ast)\ket{0}^{\otimes n_\xi}
  =
  \sum_{s=0}^{N-1}\sqrt{p_{\boldsymbol{\theta}^\ast}(\xi_s)}\,\ket{b_s^\xi}
  \approx
  \sum_{s=0}^{N-1}\sqrt{p_s}\,\ket{b_s^\xi}
  =\ket{\psi_\xi}
\end{aligned}
\end{equation}
Hereafter, we assume that the trained distribution $p_{\boldsymbol{\theta}^\ast}$ approximates the target distribution $\{p_s\}$ sufficiently well and does not constitute a dominant source of error in the expectation evaluation considered in this paper (validation of the training accuracy is provided in Sec.~\ref{subsec:qGAN-acc}). Therefore, we treat the qGAN generator as sufficiently well pre-trained and fix $\boldsymbol{\theta}^\ast$, so that the optimization targets are restricted to the QAOA variational parameters for the first and second stages.

That is, using the qGAN generator, the scenario register $\mathcal{H}_{\xi}$ consists of $n_\xi=\lceil \log N\rceil$ qubits and we can prepare the amplitude-encoded scenario state
\begin{equation}
  \ket{\psi_\xi}
  =
  \sum_{s=0}^{N-1} \sqrt{p_s}\,\ket{b_s^\xi}
  \label{eq:xi_state}
\end{equation}
where $\ket{b_s^\xi}$ denotes the computational basis state corresponding to scenario $s$. Here, We define the random-variable operator (a diagonal operator) obtained by quantizing the random variable as
\begin{equation}
  \hat{\xi}
  :=
  \mathrm{diag}(\xi_0,\xi_1,\dots,\xi_{N-1})
  \label{eq:RandomVariableOperator}
\end{equation}
so that $\hat{\xi}\ket{b_s^\xi}=\xi_s\ket{b_s^\xi}$. Hereafter, for simplicity, we assume $n_\xi=\log N$.

\subsection{Quantum circuit realization of the two-stage stochastic programming}
\label{subsec:quantum_objective}

In this subsection, we construct the qGAN-QAOA circuit shown in Fig.~\ref{fig:qGAN_QAOA_flow} and show that the objective of two-stage stochastic programming, including expected recourse cost, can be realized as a circuit expectation value. The key point is that, by coherently entangling the three registers---the first-stage, second-stage, and scenario registers---the term $\sum_s p_s Q(\boldsymbol{x},\xi_s)$ can be evaluated as the expectation value of a single quantum circuit.

\subsubsection{First-stage formulation}
\label{subsubsec:1st_stage_form}

We encode the first-stage decision variable $\boldsymbol{x}\in\mathbb{R}^n$ into the register $\mathcal{H}_{1\mathrm{st}}$ consisting of $n_{1\mathrm{st}}=\sum_i n_{1\mathrm{st},i}$ qubits. Each component $x_i$ is represented using an $n_{1\mathrm{st},i}$-bit binary fixed-point encoding as
\begin{equation}
  x_i
  =
  x_i^{\min}
  +
  \Delta_i^{1\mathrm{st}}
  \sum_{j=1}^{n_{1\mathrm{st},i}} 2^{j-1}\, b^{1\mathrm{st}}_{i,j},
  \quad
  \Delta_i^{1\mathrm{st}}=\frac{x_i^{\max}-x_i^{\min}}{2^{n_{1\mathrm{st},i}}-1}
  \label{eq:x_encoding_1st}
\end{equation}
We further quantize the binary variables by $b^{1\mathrm{st}}_{i,j}=(1-\hat{Z}^{1\mathrm{st}}_{i,j})/2$ to obtain the first-stage decision-variable operator $\hat{\boldsymbol{x}}^{1\mathrm{st}}$. Then,
$\hat{\boldsymbol{x}}^{1\mathrm{st}}\ket{b_k^{1\mathrm{st}}}
=\boldsymbol{x}^{1\mathrm{st}}_k\ket{b_k^{1\mathrm{st}}}$
holds.

In the proposed method, we define the first-stage cost Hamiltonian based on the term $f(\boldsymbol{x}):=\boldsymbol{c}^{\mathsf{T}}\boldsymbol{x}$, excluding the expected recourse-cost term, as
\begin{equation}
  H_P^{1\mathrm{st}}
  =
  \sum_i c_i\,\hat{x}_i^{1\mathrm{st}}
  \label{eq:HP1}
\end{equation}
and let the first-stage mixer act only on $\mathcal{H}_{1\mathrm{st}}$.
That is, we define the mixer Hamiltonian as
\begin{equation}
  H_M^{1\mathrm{st}}
  =
  -\sum_{i=0}^{n_{1\mathrm{st}}-1}
  \hat{I}_{2\mathrm{nd}}\otimes \hat{X}^{1\mathrm{st}}_{i}\otimes \hat{I}_{\xi}
  \label{eq:HM1}
\end{equation}
and, using a $p_{1\mathrm{st}}$-layer QAOA, obtain
\begin{equation}\label{eq:x1_state}
\begin{aligned}
  \ket{x^{1\mathrm{st}}}
  &=
  \prod_{\ell=1}^{p_{1\mathrm{st}}}
    \Bigl[
    U_M^{1\mathrm{st}}(\beta_\ell^{1\mathrm{st}})
    U_C^{1\mathrm{st}}(\gamma_\ell^{1\mathrm{st}})
    \ket{+}^{\otimes n_{1\mathrm{st}}}
    \Bigl]\\
  &=
  \sum_{k=0}^{2^{n_{1\mathrm{st}}}-1}
    \alpha_k\,\ket{b_k^{1\mathrm{st}}}
\end{aligned}
\end{equation}
Here,
$U_C^{1\mathrm{st}}(\gamma)=\exp(-i\gamma H_P^{1\mathrm{st}})$ and
$U_M^{1\mathrm{st}}(\beta)=\exp(-i\beta H_M^{1\mathrm{st}})$, which correspond to the first-stage QAOA block in Fig.~\ref{fig:qGAN_QAOA_flow}.
That is, by executing the first-stage variational circuit before the second-stage variational circuit, the circuit architecture ensures a property corresponding to non-anticipativity constraints: the first-stage measurement outcome statistics do not change when conditioned on the scenario (Sec.~\ref{subsec:non_anticipativity}).

\subsubsection{Second-stage formulation}
\label{subsubsec:2nd_stage_form}

In classical formulations, the second-stage variables $\boldsymbol{y}(\xi_s)$ appear as distinct variables for each scenario. In the proposed method, however, scenarios are handled simultaneously in superposition on a single register $\mathcal{H}_{2\mathrm{nd}}$. This is also consistent with existing frameworks that parallelize the second-stage problem using scenario superposition \cite{rotello2024_expected_value}. Using $\mathcal{H}_{2\mathrm{nd}}$ consisting of $n_{2\mathrm{nd}}=\sum_i n_{2\mathrm{nd},i}$ qubits, we encode each component in an $n_{2\mathrm{nd},i}$-bit binary fixed-point encoding as
\begin{equation}
  y_i
  =
  y_i^{\min}
  +
  \Delta_i^{2\mathrm{nd}}
  \sum_{j=1}^{n_{2\mathrm{nd},i}} 2^{j-1}\, b^{2\mathrm{nd}}_{i,j},
  \quad
  \Delta_i^{2\mathrm{nd}}=\frac{y_i^{\max}-y_i^{\min}}{2^{n_{2\mathrm{nd},i}}-1}
  \label{eq:y_encoding_2nd}
\end{equation}
and quantize the binary variables by $b^{2\mathrm{nd}}_{i,j}=(1-\hat{Z}^{2\mathrm{nd}}_{i,j})/2$.
We decompose the Hamiltonian including the second-stage cost and constraint penalties as
\begin{equation}\label{eq:HP2_split}
\begin{aligned}
  H_P^{2\mathrm{nd}}
  =
  H^{2\mathrm{nd}}(\hat{\boldsymbol{y}}^{2\mathrm{nd}},\hat{\boldsymbol{x}}^{1\mathrm{st}})
  +
  H^{2\mathrm{nd}}(\hat{\boldsymbol{y}}^{2\mathrm{nd}},\hat{\boldsymbol{x}}^{1\mathrm{st}},\hat{\xi})    
\end{aligned}
\end{equation}
where the first term is scenario-independent and the second term is scenario-dependent through $\hat{\xi}$. Since we consider (diagonal) Hamiltonians composed only of Pauli-$Z$ operators, these terms commute. The second-stage mixer is defined as
\begin{equation}
  H_M^{2\mathrm{nd}}
  =
  - \sum_{i=0}^{n_{2\mathrm{nd}}-1}
    \hat{X}^{2\mathrm{nd}}_i \otimes \hat{I}_{1\mathrm{st}+\xi}
  \label{eq:HM2}
\end{equation}
and acts only on $\mathcal{H}_{2\mathrm{nd}}$.

\subsubsection{Second-stage QAOA implementation with quantum mapping gate}
\label{subsubsec:qaoa_mapping_gate}
To apply second-stage QAOA, we define the input state as
\begin{equation}
  \ket{\psi_{\mathrm{in}}}
  :=
  \ket{+}^{\otimes n_{2\mathrm{nd}}}\otimes
  \ket{x^{1\mathrm{st}}}\otimes
  \ket{\psi_\xi}
  \label{eq:init_state_all}
\end{equation}
where $\ket{x^{1\mathrm{st}}}$ is the state prepared by the first-stage QAOA in Sec.~\ref{subsubsec:1st_stage_form} and $\ket{\psi_\xi}$ is the scenario-superposition state constructed in Sec.~\ref{subsec:Quantization_RV}. For notational convenience, the following derivation treats $\ket{x^{1\mathrm{st}}}$ as given; in the actual optimization, however, the first- and second-stage variational parameters are updated jointly, and the overall procedure minimizes a single objective function (Eq.~\eqref{eq:loss_def}).

We implement the second stage using a $p_{2\mathrm{nd}}$-layer QAOA. Specifically, for each layer $\ell$ we apply
\begin{equation}\label{eq:2nd_layer}
\begin{aligned}
  U^{2\mathrm{nd}}_\ell
  &:=
  U^{2\mathrm{nd}}_M(\beta_\ell^{2\mathrm{nd}})
  \,U^{2\mathrm{nd}}_P(\gamma_\ell^{2\mathrm{nd}})\\
  &=
  \exp\!\bigl(-i\beta_\ell^{2\mathrm{nd}} H_M^{2\mathrm{nd}}\bigr)\,
  \exp\!\bigl(-i\gamma_\ell^{2\mathrm{nd}} H_P^{2\mathrm{nd}}\bigr)
\end{aligned}
\end{equation}
Here, $H_M^{2\mathrm{nd}}$ is a mixer acting only on the second-stage decision-variable register $\mathcal{H}_{2\mathrm{nd}}$, and $H_P^{2\mathrm{nd}}$ is a diagonal Hamiltonian representing the second-stage cost and constraint penalties. In this paper, we decompose $H_P^{2\mathrm{nd}}$ into a term independent of uncertainty and a term depending on uncertainty (and the first stage) as
\begin{equation}
  H_P^{2\mathrm{nd}}
  =
  H^{2\mathrm{nd}}(\hat{\boldsymbol{y}}^{2\mathrm{nd}},\hat{\boldsymbol{x}}^{1\mathrm{st}})
  +
  H^{2\mathrm{nd}}(\hat{\boldsymbol{y}}^{2\mathrm{nd}},\hat{\boldsymbol{x}}^{1\mathrm{st}},\hat{\xi})
  \label{eq:Hp2_decomp_here}
\end{equation}
(which is equivalent to Eq.~\eqref{eq:HP2_split}). Since both terms are diagonal operators composed only of Pauli-$Z$ operators, they commute, and thus
\begin{equation}
  \exp\!\bigl(-i\gamma_\ell^{2\mathrm{nd}} H_P^{2\mathrm{nd}}\bigr)
  =
  U_C^{2\mathrm{nd}}(\gamma_\ell^{2\mathrm{nd}})
  \,U_{\mathrm{MAP}}^{2\mathrm{nd}}(\gamma_\ell^{2\mathrm{nd}})
  \label{eq:split_map}
\end{equation}
where
\begin{align}
  U_C^{2\mathrm{nd}}(\gamma)
  &:=
  \exp\!\Bigl(
    -i\gamma\,
    H^{2\mathrm{nd}}(\hat{\boldsymbol{y}}^{2\mathrm{nd}},\hat{\boldsymbol{x}}^{1\mathrm{st}})
  \Bigr)
  \label{eq:Uc2_def}\\
  U_{\mathrm{MAP}}^{2\mathrm{nd}}(\gamma)
  &:=
  \exp\!\Bigl(
    -i\gamma\,
    H^{2\mathrm{nd}}(\hat{\boldsymbol{y}}^{2\mathrm{nd}},\hat{\boldsymbol{x}}^{1\mathrm{st}},\hat{\xi})
  \Bigr)
  \label{eq:Umap2_def}
\end{align}
In this paper, we refer to $U_{\mathrm{MAP}}^{2\mathrm{nd}}$ as the quantum mapping gate. This gate acts on all registers (scenario, first-stage, and second-stage) rather than only on the second-stage register, and implements a phase rotation of the cost-Hamiltonian gate corresponding to the second-stage cost term
$H^{2\mathrm{nd}}(\hat{\boldsymbol{y}}^{2\mathrm{nd}},\boldsymbol{x}_k^{1\mathrm{st}},\xi_s)$
conditioned on the computational basis state $\ket{b_k^{1\mathrm{st}}}\ket{b_s^\xi}$, thereby enabling a conditional evaluation of the costs for all $(k,s)$ within the same circuit.

Consequently, the final state can be written as
\begin{align}
  \ket{\Psi}
  &=
  \prod_{\ell=1}^{p_{2\mathrm{nd}}}
  \Bigl[
    U_M^{2\mathrm{nd}}(\beta_\ell^{2\mathrm{nd}})
    U_C^{2\mathrm{nd}}(\gamma_\ell^{2\mathrm{nd}})
    U_{\mathrm{MAP}}^{2\mathrm{nd}}(\gamma_\ell^{2\mathrm{nd}})
  \Bigr]
  \ket{\psi_{\mathrm{in}}}
  \nonumber\\
  &=
  \sum_{k=0}^{2^{n_{1\mathrm{st}}}-1}
  \alpha_k
  \sum_{s=0}^{N-1} \sqrt{p_s}\,
    \ket{\psi_{s,k}^{2\mathrm{nd}}}
    \otimes \ket{b_k^{1\mathrm{st}}}
    \otimes \ket{b_s^\xi}
  \label{eq:psi_final_expanded}
\end{align}
where
\begin{equation}\label{eq:2nd_final_state}
\begin{aligned}
  \ket{\psi_{s,k}^{2\mathrm{nd}}}
  :=
  \prod_{\ell=1}^{p_{2\mathrm{nd}}}
  \Bigl[
    U_M^{2\mathrm{nd}}(\beta_\ell^{2\mathrm{nd}})
    \exp\Bigl(
      -i\gamma_\ell^{2\mathrm{nd}}\,
      H_P^{2\mathrm{nd}}\bigl(
        \hat{\boldsymbol{y}}^{2\mathrm{nd}},
        \boldsymbol{x}_k^{1\mathrm{st}},
        \xi_s
      \bigr)
    \Bigr)
  \Bigr]
  \ket{+}^{\otimes n_{2\mathrm{nd}}}
\end{aligned}
\end{equation}
is the final state of the second-stage QAOA for first-stage solution $\boldsymbol{x}_k^{1\mathrm{st}}$ and scenario $\xi_s$. Since it is expressed solely by unitary gates acting on the register $\mathcal{H}_{2\mathrm{nd}}$, we have $\langle\psi_{s,k}^{2\mathrm{nd}}|\psi_{s,k}^{2\mathrm{nd}}\rangle=1$. Therefore, Eq.~\eqref{eq:psi_final_expanded} provides an ansatz that simultaneously prepares the second-stage states corresponding to each $(k,s)$ under the first-stage distribution $|\alpha_k|^2$ and the scenario distribution $p_s$. Note that, while $U_{\mathrm{MAP}}^{2\mathrm{nd}}$ introduces a conditional phase depending on $(k,s)$, it does not make the marginal measurement distribution of the first-stage register depend on the scenario, as shown in Sec.~\ref{subsec:non_anticipativity}.

\subsubsection{Quantum objective and variational optimization}
\label{subsubsec:quantum_objective_opt}

We define the overall problem Hamiltonian as
\begin{equation}
  H_P
  :=
  H_P^{1\mathrm{st}}(\hat{\boldsymbol{x}}^{1\mathrm{st}})
  +
  H_P^{2\mathrm{nd}}(\hat{\boldsymbol{y}}^{2\mathrm{nd}},\hat{\boldsymbol{x}}^{1\mathrm{st}},\hat{\xi})
  \label{eq:total_HP}
\end{equation}
and evaluate the expectation value $\bra{\Psi}H_P\ket{\Psi}$.

\begin{proposition}
  \label{prop:quantum_two_stage_objective}
  Let $\ket{\Psi}$ be defined by Eq.~\eqref{eq:psi_final_expanded}. Then,
  \begin{equation}\label{eq:quantum_two_stage_obj}
  \begin{aligned}
    \bra{\Psi} H_P \ket{\Psi}
    =
    \sum_{k=0}^{2^{n_{1\mathrm{st}}}-1}
    |\alpha_k|^2
    \left[
      H_P^{1\mathrm{st}}(\boldsymbol{x}_k^{1\mathrm{st}})
      +
      \sum_{s=0}^{N-1} p_s\,
      Q(\boldsymbol{x}_k^{1\mathrm{st}},\xi_s)
    \right]
  \end{aligned}
  \end{equation}
  where
  \begin{equation}
    Q(\boldsymbol{x}_k^{1\mathrm{st}},\xi_s)
    :=
    \bra{\psi_{s,k}^{2\mathrm{nd}}}
      H_P^{2\mathrm{nd}}(\hat{\boldsymbol{y}}^{2\mathrm{nd}},\boldsymbol{x}_k^{1\mathrm{st}},\xi_s)
    \ket{\psi_{s,k}^{2\mathrm{nd}}}
    \label{eq:Q_quantum_def}
  \end{equation}
  holds.
\end{proposition}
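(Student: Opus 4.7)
The plan is to split $H_P=H_P^{1\mathrm{st}}+H_P^{2\mathrm{nd}}$, expand $\ket{\Psi}$ from Eq.~\eqref{eq:psi_final_expanded} as a triple sum over $(k,s,k',s')$, and exploit the diagonal action of each piece on the first-stage and scenario registers to collapse the double sum via Kronecker deltas. All the ingredients already sit in the text: $H_P^{1\mathrm{st}}$ is a Pauli-$Z$ polynomial in $\hat{\boldsymbol{x}}^{1\mathrm{st}}$ alone, $H_P^{2\mathrm{nd}}$ is built only from Pauli-$Z$ operators on $\mathcal{H}_{1\mathrm{st}}$ and $\mathcal{H}_{\xi}$ (so diagonal in $\ket{b_k^{1\mathrm{st}}}\ket{b_s^\xi}$) while acting as a genuine operator on $\mathcal{H}_{2\mathrm{nd}}$, and the states $\ket{\psi_{s,k}^{2\mathrm{nd}}}$ are unit-normalized.

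First, I would handle the $H_P^{1\mathrm{st}}$ piece. Because $H_P^{1\mathrm{st}}(\hat{\boldsymbol{x}}^{1\mathrm{st}})$ acts as identity on $\mathcal{H}_{2\mathrm{nd}}\otimes \mathcal{H}_{\xi}$ and as a diagonal operator on $\mathcal{H}_{1\mathrm{st}}$ with eigenvalues $H_P^{1\mathrm{st}}(\boldsymbol{x}_k^{1\mathrm{st}})$, its matrix element between two terms of $\ket{\Psi}$ picks up $\delta_{kk'}\delta_{ss'}\braket{\psi_{s,k}^{2\mathrm{nd}}|\psi_{s,k}^{2\mathrm{nd}}}=\delta_{kk'}\delta_{ss'}$. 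Summing over $s$ with weights $p_s$ (which sum to one) yields
\begin{equation}
\bra{\Psi}H_P^{1\mathrm{st}}\ket{\Psi}
=\sum_{k}|\alpha_k|^2\, H_P^{1\mathrm{st}}(\boldsymbol{x}_k^{1\mathrm{st}}).
\end{equation}

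Second, for $H_P^{2\mathrm{nd}}$ I would use that it is diagonal in both $\hat{\boldsymbol{x}}^{1\mathrm{st}}$ and $\hat{\xi}$, so on a basis state $\ket{b_k^{1\mathrm{st}}}\ket{b_s^\xi}$ of those two registers it reduces to the operator $H_P^{2\mathrm{nd}}(\hat{\boldsymbol{y}}^{2\mathrm{nd}},\boldsymbol{x}_k^{1\mathrm{st}},\xi_s)$ acting only on $\mathcal{H}_{2\mathrm{nd}}$. Orthogonality of $\{\ket{b_k^{1\mathrm{st}}}\}$ and $\{\ket{b_s^\xi}\}$ again kills all off-diagonal $(k',s')\neq(k,s)$ pairs, leaving precisely the matrix element $\bra{\psi_{s,k}^{2\mathrm{nd}}}H_P^{2\mathrm{nd}}(\hat{\boldsymbol{y}}^{2\mathrm{nd}},\boldsymbol{x}_k^{1\mathrm{st}},\xi_s)\ket{\psi_{s,k}^{2\mathrm{nd}}}=Q(\boldsymbol{x}_k^{1\mathrm{st}},\xi_s)$ by Eq.~\eqref{eq:Q_quantum_def}. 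Adding the two contributions gives Eq.~\eqref{eq:quantum_two_stage_obj}.

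There is no analytically hard step here; the proof is essentially bookkeeping. The only point that deserves explicit mention is the justification that $H_P^{2\mathrm{nd}}$ is simultaneously diagonal on $\mathcal{H}_{1\mathrm{st}}$ and $\mathcal{H}_{\xi}$ so that eigenvalue substitution $\hat{\boldsymbol{x}}^{1\mathrm{st}}\mapsto\boldsymbol{x}_k^{1\mathrm{st}}$, $\hat{\xi}\mapsto\xi_s$ is legitimate when sandwiched against $\ket{b_k^{1\mathrm{st}}}\ket{b_s^\xi}$; this follows from the Pauli-$Z$-only construction noted after Eq.~\eqref{eq:HP2_split}. Everything else is linearity of the inner product and the normalization $\braket{\psi_{s,k}^{2\mathrm{nd}}|\psi_{s,k}^{2\mathrm{nd}}}=1$ guaranteed by the unitarity of the second-stage QAOA block.
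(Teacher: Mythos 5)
Your proposal is correct and follows essentially the same route as the paper's own proof in Appendix~\ref{app:proof_quantum_two_stage_objective}: split $H_P$ into its first- and second-stage parts, expand $\ket{\Psi}$, use orthogonality of the first-stage and scenario basis states to kill off-diagonal terms, and invoke $\braket{\psi_{s,k}^{2\mathrm{nd}}|\psi_{s,k}^{2\mathrm{nd}}}=1$ together with $\sum_s p_s=1$ for the first-stage contribution and the diagonal eigenvalue substitution for the second-stage contribution. No gaps; your explicit remark on why the substitution $\hat{\boldsymbol{x}}^{1\mathrm{st}}\mapsto\boldsymbol{x}_k^{1\mathrm{st}}$, $\hat{\xi}\mapsto\xi_s$ is legitimate matches the paper's reliance on the Pauli-$Z$-only (diagonal) structure.
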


\begin{proof}
  Substituting Eq.~\eqref{eq:psi_final_expanded} into $\ket{\Psi}$ and using the orthogonality of $\{\ket{b_k^{1\mathrm{st}}}\}_k$ and $\{\ket{b_s^\xi}\}_s$, together with $\langle\psi_{s,k}^{2\mathrm{nd}}|\psi_{s,k}^{2\mathrm{nd}}\rangle=1$, yields Eq.~\eqref{eq:quantum_two_stage_obj}. A detailed proof is provided in App.~\ref{app:proof_quantum_two_stage_objective}.
\end{proof}

Proposition~\ref{prop:quantum_two_stage_objective} shows that minimizing the circuit expectation value in qGAN-QAOA variationally approximates and searches over the classical two-stage objective function
$f(\boldsymbol{x})+\sum_s p_s Q(\boldsymbol{x},\xi_s)$
under the distribution $|\alpha_k|^2$ over the first-stage candidate solutions $\boldsymbol{x}_k^{1\mathrm{st}}$.
Accordingly, for the variational parameters
\(
  \boldsymbol{\theta}=
  (\boldsymbol{\gamma}^{1\mathrm{st}},\boldsymbol{\beta}^{1\mathrm{st}},
   \boldsymbol{\gamma}^{2\mathrm{nd}},\boldsymbol{\beta}^{2\mathrm{nd}})
\),
we define
\begin{equation}
  \mathcal{L}(\boldsymbol{\theta})
  :=
  \bra{\Psi(\boldsymbol{\theta})} H_P \ket{\Psi(\boldsymbol{\theta})}
  \label{eq:loss_def}
\end{equation}
and solve
\begin{equation}
  \min_{\boldsymbol{\theta}} \mathcal{L}(\boldsymbol{\theta})
  \label{eq:qgan_qaoa_variational_problem}
\end{equation}
via classical optimization as shown in Fig.\ref{fig:qGAN_QAOA_flow}.

\subsection{Non-anticipativity constraints}
\label{subsec:non_anticipativity}

In this subsection, we interpret non-anticipativity constraints as a condition on measurement outcome statistics and prove that the marginal distribution of the first-stage measurement outcomes is independent of the scenario.

The non-anticipativity constraints in two-stage stochastic programming require that the first-stage decision vector does not depend on the scenario $s$, i.e., for any $s,s'$,
\begin{equation}
  \boldsymbol{x}_s=\boldsymbol{x}_{s'}
  \label{eq:non_anticipativity_classical}
\end{equation}
must hold. The essential meaning of non-anticipativity is that the first-stage decision is uniquely determined before observing the uncertainty and is not altered by the realized scenario.

In QAOA, solutions are interpreted through the statistical distribution of computational-basis measurement outcomes. Accordingly, 
Accordingly, we introduce non-anticipativity constraints for QAOA as the invariance of the marginal distribution of the first-stage measurement outcomes with respect to the measurement outcomes of the scenario register.

\begin{requirement}[non-anticipativity constraints in QAOA]\label{req:Quantum_non-anticipativity}
Let the bit string obtained by a computational-basis measurement of the first-stage register $\mathcal{H}_{1\mathrm{st}}$ be the random variable $B^{1\mathrm{st}}\in\{0,1\}^{n_{1\mathrm{st}}}$, and let the measurement outcome of the scenario register $\mathcal{H}_{\xi}$ be $B^{\xi}\in\{0,1\}^{n_{\xi}}$. As non-anticipativity constraints in QAOA, for any $s$ (with $\Pr(B^\xi=b_s^\xi)>0$) and any $k$, we assume that
\begin{equation}
  \Pr\bigl(B^{1\mathrm{st}}=b_k^{1\mathrm{st}}\mid B^\xi=b_s^\xi\bigr)
  =
  \Pr\bigl(B^{1\mathrm{st}}=b_k^{1\mathrm{st}}\bigr)
  \label{eq:non_anticipativity_quantum_postulate}
\end{equation}
holds.
\end{requirement}
Here, the measurement outcomes $B^{1\mathrm{st}}=b_k^{1\mathrm{st}}$ and $B^\xi=b_s^\xi$ correspond uniquely to the first-stage decision vector $\boldsymbol{x}_k^{1\mathrm{st}}$ and the scenario $\xi_s$ via Eqs.~\eqref{eq:x_encoding_1st} and~\eqref{eq:disc_rv}. Therefore, the above requirement is equivalent to requiring that the distribution of the first-stage decision vector $\boldsymbol{x}_k^{1\mathrm{st}}$ does not change when conditioned on the measurement outcome of the scenario register, i.e., for any scenarios $s,s'$,
\begin{equation}
  \Pr\bigl(\boldsymbol{x}^{1\mathrm{st}}
  =\boldsymbol{x}_k^{1\mathrm{st}}\mid \xi=\xi_s\bigr)
  =\Pr\bigl(\boldsymbol{x}^{1\mathrm{st}}
  =\boldsymbol{x}_k^{1\mathrm{st}}\mid \xi=\xi_{s'}\bigr),
  \quad \forall k
\end{equation}
is required to hold.

\begin{proposition}[Satisfaction of Req.~\ref{req:Quantum_non-anticipativity}]\label{prop:qna_satisfied}
The proposed method (qGAN-QAOA) satisfies Req.~\ref{req:Quantum_non-anticipativity}.
\end{proposition}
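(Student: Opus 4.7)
The plan is to reduce Requirement~\ref{req:Quantum_non-anticipativity} to a direct computation of marginal and conditional probabilities from the explicit form of the final state given in Eq.~\eqref{eq:psi_final_expanded}. The structural feature that does all the work is that both $\mathcal{H}_{1\mathrm{st}}$ and $\mathcal{H}_\xi$ appear only through definite computational-basis kets $\ket{b_k^{1\mathrm{st}}}$ and $\ket{b_s^\xi}$, with amplitudes that factorize as $\alpha_k\sqrt{p_s}$, while all the scenario- and first-stage-conditioned dynamics are absorbed into the second-stage vectors $\ket{\psi_{s,k}^{2\mathrm{nd}}}$, each of which is unit-norm as noted below Eq.~\eqref{eq:2nd_final_state}.

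First I would observe, for the reader, why the state truly has the form of Eq.~\eqref{eq:psi_final_expanded}. The qGAN output $\ket{\psi_\xi}$ is diagonal in the scenario basis; $U_C^{2\mathrm{nd}}(\gamma)$ and $U_{\mathrm{MAP}}^{2\mathrm{nd}}(\gamma)$ are diagonal in both $\mathcal{H}_{1\mathrm{st}}$ and $\mathcal{H}_\xi$; and the second-stage mixer $U_M^{2\mathrm{nd}}$ acts as identity on $\mathcal{H}_{1\mathrm{st}}\otimes\mathcal{H}_\xi$. Hence the scenario and first-stage registers are never rotated out of the computational basis, and Eq.~\eqref{eq:psi_final_expanded} is exact.

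Next I would compute the joint probability of observing $(b_k^{1\mathrm{st}},b_s^\xi)$ by tracing out $\mathcal{H}_{2\mathrm{nd}}$: by orthogonality of $\{\ket{b_k^{1\mathrm{st}}}\}$ and $\{\ket{b_s^\xi}\}$ and the normalization $\langle\psi_{s,k}^{2\mathrm{nd}}|\psi_{s,k}^{2\mathrm{nd}}\rangle=1$, one finds
\begin{equation}
\Pr\bigl(B^{1\mathrm{st}}=b_k^{1\mathrm{st}},\,B^\xi=b_s^\xi\bigr)=|\alpha_k|^2\,p_s .
\end{equation}
Summing this over $k$ gives the scenario marginal $\Pr(B^\xi=b_s^\xi)=p_s$, since $\sum_k|\alpha_k|^2=1$ from the unitarity of the first-stage QAOA block. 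Dividing then yields
\begin{equation}
\Pr\bigl(B^{1\mathrm{st}}=b_k^{1\mathrm{st}}\mid B^\xi=b_s^\xi\bigr)=|\alpha_k|^2=\Pr\bigl(B^{1\mathrm{st}}=b_k^{1\mathrm{st}}\bigr),
\end{equation}
which is exactly Eq.~\eqref{eq:non_anticipativity_quantum_postulate}.

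The only subtle step, and therefore the part I would present most carefully, is the first one: justifying that the quantum mapping gate $U_{\mathrm{MAP}}^{2\mathrm{nd}}$---even though it couples all three registers---does not spoil the factorized form. The point is that $U_{\mathrm{MAP}}^{2\mathrm{nd}}$ enters only as a phase $e^{-i\gamma H^{2\mathrm{nd}}(\hat{\boldsymbol{y}}^{2\mathrm{nd}},\boldsymbol{x}_k^{1\mathrm{st}},\xi_s)}$ conditioned on $\ket{b_k^{1\mathrm{st}}}\ket{b_s^\xi}$, so it modifies $\ket{\psi_{s,k}^{2\mathrm{nd}}}$ but preserves both its norm and the diagonality on $\mathcal{H}_{1\mathrm{st}}\otimes\mathcal{H}_\xi$. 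Once that is spelled out, the rest is a two-line marginalization and the conclusion follows without any additional machinery.
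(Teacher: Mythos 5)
Your proof is correct and follows essentially the same route as the paper's: both compute the joint/conditional measurement probabilities directly from the state in Eq.~\eqref{eq:psi_final_expanded}, using orthogonality of the first-stage and scenario basis states and the unit norm of $\ket{\psi_{s,k}^{2\mathrm{nd}}}$ to obtain $\Pr(B^{1\mathrm{st}}=b_k^{1\mathrm{st}}\mid B^\xi=b_s^\xi)=|\alpha_k|^2$ independent of $s$. Your added justification that the diagonal action of $U_C^{2\mathrm{nd}}$ and $U_{\mathrm{MAP}}^{2\mathrm{nd}}$ on $\mathcal{H}_{1\mathrm{st}}\otimes\mathcal{H}_\xi$ preserves the factorized form is a welcome clarification, but it does not change the argument, which the paper carries out with explicit projectors.
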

\begin{proof}
We show that Req.~\ref{req:Quantum_non-anticipativity} holds by evaluating the measurement probabilities induced by a computational-basis measurement of the state $\ket{\Psi}$ in Eq.~\eqref{eq:psi_final_expanded}. Define the projectors for computational-basis measurements as
\begin{equation*}
\begin{aligned}
  \Pi_k^{1\mathrm{st}}
  :=\hat{I}_{2\mathrm{nd}}
  \otimes\ket{b_k^{1\mathrm{st}}}
  \bra{b_k^{1\mathrm{st}}}\otimes\hat{I}_{\xi},
  \quad
  \Pi_s^{\xi}
  :=\hat{I}_{2\mathrm{nd}}
  \otimes\hat{I}_{1\mathrm{st}}\otimes
  \ket{b_s^{\xi}}\bra{b_s^{\xi}} 
\end{aligned}
\end{equation*}
Using the definition of conditional probability $\Pr(A\mid B)=\Pr(A,B)/\Pr(B)$ and $\langle\psi_{s,k}^{2\mathrm{nd}}|\psi_{s,k}^{2\mathrm{nd}}\rangle=1$, we obtain
\begin{align*}
  \Pr\bigl(B^{1\mathrm{st}}=b_k^{1\mathrm{st}}\mid B^\xi=b_s^\xi\bigr)
  &=
  \frac{
    \bra{\Psi}\Pi_k^{1\mathrm{st}} \Pi_s^{\xi}\ket{\Psi}
  }{
    \bra{\Psi} \Pi_s^{\xi}\ket{\Psi}
  }\\
  &=
  \frac{|\alpha_k|^2\,p_s}{p_s}
  =|\alpha_k|^2\\
  &=\bra{\Psi}\Pi_k^{1\mathrm{st}}\ket{\Psi}\\
  &=
  \Pr\bigl(B^{1\mathrm{st}}=b_k^{1\mathrm{st}})
  \qquad
  \quad \forall s,k.
\end{align*}
Hence,
\begin{equation}\label{eq:non_anticipativity_result}
  \Pr\bigl(B^{1\mathrm{st}}=b_k^{1\mathrm{st}}\mid B^\xi=b_s^\xi\bigr)
  =\Pr\bigl(B^{1\mathrm{st}}=b_k^{1\mathrm{st}})
  \qquad \forall s,k
\end{equation}
and Req.~\ref{req:Quantum_non-anticipativity} follows.
\end{proof}

On the other hand, since the second-stage state $\ket{\psi_{s,k}^{2\mathrm{nd}}}$ depends on $(s,k)$ through Eq.~\eqref{eq:2nd_final_state}, quantum entanglement between the first-stage register and the scenario register can, in principle, be generated. However, as Eq.~\eqref{eq:non_anticipativity_result} shows, the property essential for non-anticipativity constraints in QAOA---namely, that \emph{in computational-basis measurements} the first-stage measurement outcomes do not depend on the scenario---is preserved. 
Consequently, the proposed method enables scenario-dependent second-stage recourse optimization while maintaining the first-stage non-anticipativity constraints, thereby allowing the two-stage objective including expected recourse cost to be evaluated and minimized within a unified quantum-circuit workflow.

\section{Case study: two-stage stochastic Unit Commitment Problem under PV uncertainty}
\label{sec:problem_setting}

\begin{figure}[h]
\centering
\includegraphics[width=\linewidth, keepaspectratio]{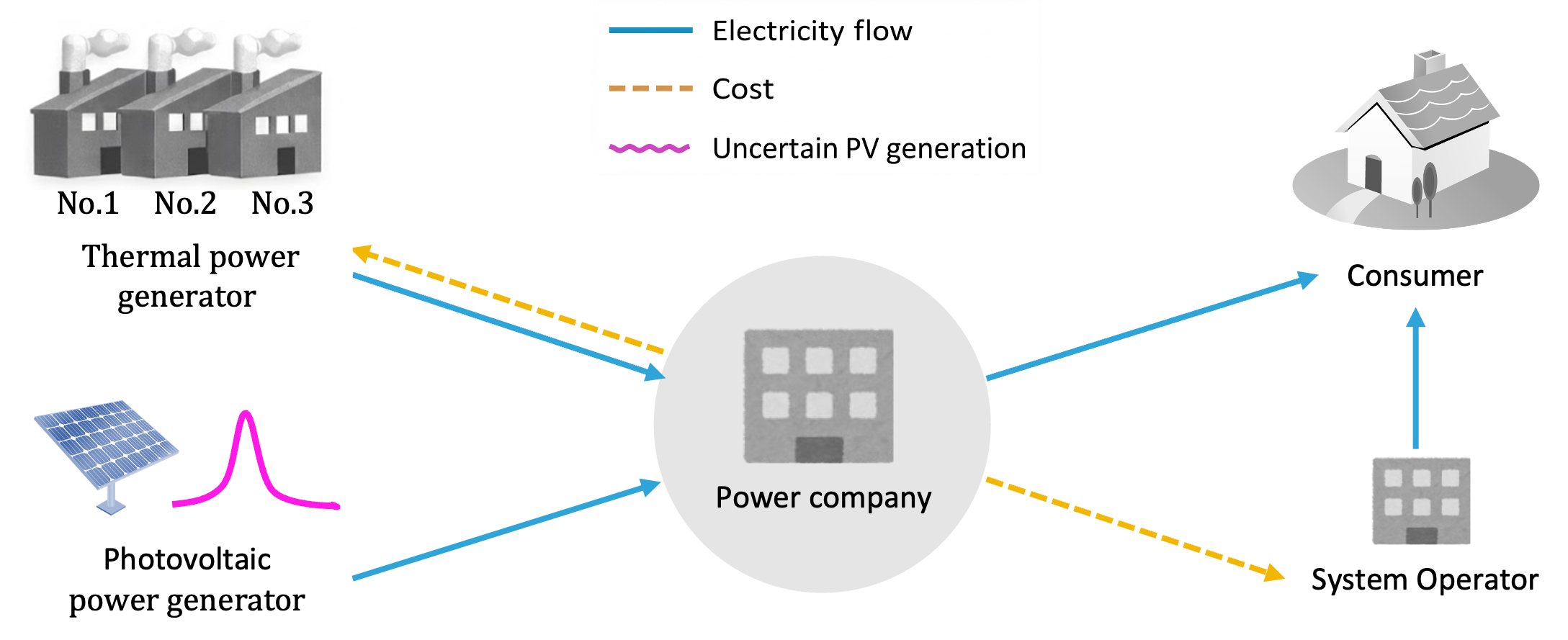}
\caption{
Schematic overview of the stochastic UCP for a power company owning uncertain PV generation.
The power company supplies electricity to consumers by integrating generation from thermal power units (No.~1--3) and PV, while PV output is uncertain (pink), leading to an imbalance-adjustment power flow by the system operator (blue) and an imbalance cost (orange dashed line).
}
\label{fig:s_d_problem}
\end{figure}

In this section, as a concrete case study of the two-stage stochastic programming problem in Sec.~\ref{sec:two_stage_sp}, we formulate a single-period (one-step) stochastic UCP with uncertainty in PV output, as illustrated in Fig.~\ref{fig:s_d_problem}. Moreover, to connect directly to the subsequent quantum-circuit implementation (Sec.~\ref{sec:method}), we provide (i) a mapping to HUBO (higher-order unconstrained binary optimization) and (ii) the explicit form of the corresponding diagonal Hamiltonian. The problem considered here can be regarded as a minimal model of stochastic UCP (start-up and output adjustment) \cite{takriti1996_stochastic_uc,anjos2017_uc,papavasiliou2012_suc_renewables}.

\subsection{Two-stage formulation of stochastic UCP}
\label{subsec:uc_twostage}

We consider a single-period (one-step) setting with $M$ thermal generators whose on/off statuses are decided in the first stage. 
The index $i\in\{1,\dots,M\}$ denotes a thermal generator, and $s\in\{0,\dots,N-1\}$ denotes a PV-output scenario. The demand $D$ is assumed to be a given deterministic value (extensions to include demand uncertainty are possible, but in this section we focus on PV-side uncertainty). For scenario $s$, the PV output is given by the realization $\xi_s$ of the random variable $\xi$, occurring with probability $p_s$. The output range of thermal generator $i$ is $[P^{\mathrm{min}}_i,P^{\mathrm{max}}_i]$, its start-up cost is $d_i$, and its generation cost per unit output is $c_i$. We also introduce an imbalance cost (penalty coefficient) $\lambda>0$ for supply--demand mismatch.

The first-stage decision is a binary variable $x_i\in\{0,1\}$ indicating the on/off status of generator $i$, and we collect them as $\boldsymbol{x}=(x_1,\dots,x_M)^\mathsf{T}$. In the second stage, after observing $\xi_s$, we determine the generator outputs $\boldsymbol{y}_s=(y_{1,s},\dots,y_{M,s})^\mathsf{T}$, where $y_{i,s}\in\mathbb{R}_{\ge 0}$ denotes the output of generator $i$. Let $\sigma_s\in\mathbb{R}$ denote the slack (imbalance) in the power balance, and define the power-balance constraint as
\begin{equation}
  \xi_s+\sum_{i=1}^{M}y_{i,s}+\sigma_s=D
  \label{eq:uc_balance_def}
\end{equation}
Accordingly, the problem is formulated as the following two-stage stochastic programming:
\begin{align}
  \min_{\boldsymbol{x},\{\boldsymbol{y}_s\}}
  \; & \sum_{i=1}^{M} d_i x_i
       + \sum_{s=0}^{N-1} p_s
         \left(
           \sum_{i=1}^{M} c_i y_{i,s}
           + \lambda\,|\sigma_s|
         \right),
  \label{eq:uc_sp_obj_xi}\\
  \text{s.t.}\;\;
  & \sigma_s = D-\xi_s-\sum_{i=1}^{M} y_{i,s},
       \quad \forall s,
  \label{eq:uc_sp_sigma_def}\\
  & P^{\mathrm{min}}_ix_i \le y_{i,s} \le P^{\mathrm{max}}_i x_i,
    \quad \forall i,\ \forall s,
  \label{eq:uc_sp_capacity_y}\\
  & x_i\in\{0,1\},\quad \forall i
  \label{eq:uc_sp_binary_x}
\end{align}
In this two-stage formulation, the first-stage decision $\boldsymbol{x}$ is shared across all scenarios (non-anticipativity constraints), whereas only the second-stage variables $\boldsymbol{y}_s$ depend on $s$, which constitutes the core of the two-stage structure.

\subsection{HUBO formulation using \texorpdfstring{$x$}{x}-controlled binary encoding}

In this subsection, we map Eqs.~\eqref{eq:uc_sp_obj_xi}--\eqref{eq:uc_sp_binary_x} to a HUBO formulation.
From the perspective of quantum implementation, rather than enforcing the conditional constraint $y_{i,s}\le \bar P_i x_i$ via an additional penalty term, it is often simpler to parameterize the feasible set directly at the encoding stage, which can lead to a more concise circuit realization. Accordingly, we represent the second-stage outputs using an \texorpdfstring{$x$}{x}-controlled binary encoding so that Eq.~\eqref{eq:uc_sp_capacity_y} is satisfied structurally without introducing additional penalty terms.

\subsubsection{\texorpdfstring{$x$}{x}-controlled binary encoding of second-stage outputs}
\label{subsubsec:uc_xgated}

We introduce binary variables $b^{2\mathrm{nd}}_{i,s,j}\in\{0,1\}$ ($j=1,\dots,n_{2\mathrm{nd},i}$) and define
\begin{equation}\label{eq:y_gated_encoding}
\begin{aligned}
  &y_{i,s}
  :=
  x_i
  \Bigg(P^{\mathrm{min}}_i +
  \sum_{j=1}^{n_{2\mathrm{nd},i}}
  \Delta^{2\mathrm{nd}}_i\,2^{j-1}\, b^{2\mathrm{nd}}_{i,s,j}
  \Biggl),\quad
  \Delta^{2\mathrm{nd}}_i 
  := \frac{P^{\mathrm{max}}_i-P^{\mathrm{min}}_i}{2^{n_{2\mathrm{nd},i}}-1}
\end{aligned}
\end{equation}
We refer to this construction as an \texorpdfstring{$x$}{x}-controlled binary encoding, since the binary expansion is activated only when $x_i=1$.
With $x_i\in\{0,1\}$, we have $y_{i,s}=0$ whenever $x_i=0$, whereas for $x_i=1$ it follows that $0\le y_{i,s}\le \bar P_i$.
Therefore, Eq.~\eqref{eq:uc_sp_capacity_y} is not removed; rather, it is embedded as an encoding that directly represents a subset of the feasible set.

\subsubsection{Eliminating the imbalance variable and using a quadratic penalty}
\label{subsubsec:uc_sigma_elim}

From Eq.~\eqref{eq:uc_sp_sigma_def}, we can eliminate $\sigma_s$ as
$\sigma_s = D-\xi_s-\sum_{i=1}^{M} y_{i,s}$.
Moreover, to ensure compatibility with the HUBO form, in this section we replace the $L_1$ penalty $\lambda\lvert\sigma_s\rvert$ with the $L_2$ penalty $\lambda\sigma_s^2$, and consider
\begin{equation}\label{eq:HUBO_obj_core}
\begin{aligned}
  \min_{\boldsymbol{x},\{b^{2\mathrm{nd}}_{i,s,j}\}}
  \ \sum_{i=1}^{M} d_i x_i
  + \sum_{s=0}^{N-1} \,
   p_s
  \Bigg[
    \sum_{i=1}^{M} c_i y_{i,s}
    + \lambda\Bigl(D-\xi_s-\sum_{i=1}^{M} y_{i,s}\Bigr)^2
  \Bigg]
\end{aligned}
\end{equation}
as the objective for analysis and implementation. This replacement is not an exact equivalence transformation; rather, it is positioned as a smooth surrogate (penalized) model that strongly suppresses supply--demand imbalance.

In this paper, the stochastic UCP is originally formulated with an $L_1$ imbalance penalty $\lvert\sigma_s\rvert$, which we use as the reference objective for evaluation. However, keeping the $L_1$ term prevents a direct mapping to a HUBO form, and thus hinders the Hamiltonian-based implementation required by our quantum-circuit workflow. We therefore replace $\lvert\sigma_s\rvert$ with a quadratic surrogate and use the resulting $L_2$-penalized objective in Eq.~\eqref{eq:HUBO_obj_core} for the HUBO mapping and for running qGAN-QAOA. In Sec.~\ref{subsec:qganqaoa_results}, specifically in Eqs.~\eqref{eq:Exp_cost_qGAN-QAOA}--\eqref{eq:eev}, the solutions obtained by minimizing this surrogate objective are evaluated and compared using the original $L_1$ two-stage objective; the same $L_1$ objective is also used for the classical baselines (RP, EEV).
In this way, since the main focus of this paper is the quantum-circuit realization of the two-stage structure and the associated scaling discussion, we prioritize circuit simplicity over the exactness of the penalty model%
\footnote{%
To map an $L_1$ penalty exactly into a HUBO form, one needs to introduce auxiliary variables to linearize the absolute value and to convert inequality constraints into an unconstrained form (via additional penalties), which generally increases the number of auxiliary bits and quadratic couplings. In addition, choosing penalty coefficients that satisfy sufficient conditions is separately required. In light of the main focus of this paper (circuit-size analysis and implementation simplification), we adopt in this section a surrogate objective based on an $L_2$ penalty.%
}.

\subsubsection{HUBO form and two-stage interpretation}
\label{subsubsec:uc_HUBO_std}

Substituting Eq.~\eqref{eq:y_gated_encoding} into Eq.~\eqref{eq:HUBO_obj_core}, the objective function can be expanded as a fourth-order polynomial in $x_i$ and $b^{2\mathrm{nd}}_{i,s,j}$. Therefore, the problem can be mapped to the standard form of a HUBO with maximum degree $d=4$:
\begin{equation}
  \min_{\boldsymbol{z}\in\{0,1\}^n}\;
  \sum_{k=0}^{4}\ \sum_{1\le i_1<\cdots<i_k\le n}
  Q^{(k)}_{i_1\cdots i_k}\, z_{i_1} z_{i_2}\cdots z_{i_k}
  \label{eq:hubo_standard_form}
\end{equation}
Here, $\boldsymbol{z}$ is the binary decision vector obtained by concatenating $\{x_i\}$ and $\{b^{2\mathrm{nd}}_{i,s,j}\}$. The coefficients $Q^{(k)}_{i_1\cdots i_k}$ correspond to the $k$th-order terms ($k=0,\dots,4$) obtained by expanding Eq.~\eqref{eq:hubo_standard_form}. Although eliminating $\sigma_s$ and adopting a penalty model simplify the constraint representation, the two-stage interpretation---namely, that the second-stage decisions are selected after observing $\xi_s$---is preserved. In this sense, the HUBO constructed in this section is consistent as a penalized approximation of the two-stage stochastic program.

\subsection{Diagonal Hamiltonian for stochastic UCP in qGAN-QAOA}
\label{subsec:uc_hamiltonian}

In this subsection, we provide only the operator definitions required to map Eq.~\eqref{eq:HUBO_obj_core} to a diagonal Hamiltonian for quantum-circuit implementation (for the circuit construction and the proof that the circuit expectation reproduces the two-stage objective, see Sec.~\ref{sec:method}). Since the scenario probabilities $\{p_s\}$ are amplitude-encoded by qGAN as $\ket{\psi_\xi}=\sum_s \sqrt{p_s}\ket{b_s^\xi}$, it suffices to prepare $\hat{\xi}$ as a diagonal operator; then expectation-value evaluation yields weighted sums of the form $\sum_s p_s(\cdot)$.

\subsubsection{Binary-to-Pauli mapping and decision variable operator}
\label{subsubsec:uc_bin_to_pauli}

For the first-stage bit $x_i$ and the second-stage bit $b^{2\mathrm{nd}}_{i,j}$, we quantize them as
\begin{equation}
  x_i=\frac{1-\hat Z^{1\mathrm{st}}_i}{2},
  \qquad
  b^{2\mathrm{nd}}_{i,j}=\frac{1-\hat Z^{2\mathrm{nd}}_{i,j}}{2}
  \label{eq:bin_to_pauli_uc}
\end{equation}
We define the decision-variable operator corresponding to Eq.~\eqref{eq:y_gated_encoding} as
\begin{equation}
\begin{aligned}\label{eq:y_op_uc}
  &\hat y^{2\mathrm{nd}}_i
  :=
  \hat x^{1\mathrm{st}}_i
  \Bigg(P^{\mathrm{min}}_i +
  \sum_{j=1}^{n_{2\mathrm{nd},i}}
  \Delta^{2\mathrm{nd}}_i\,2^{j-1}\,\hat b^{2\mathrm{nd}}_{i,j}
  \Bigg)
  \\
  &\hat x^{1\mathrm{st}}_i:=\frac{1-\hat Z^{1\mathrm{st}}_i}{2},
  \quad
  \hat b^{2\mathrm{nd}}_{i,j}:=\frac{1-\hat Z^{2\mathrm{nd}}_{i,j}}{2} 
\end{aligned}
\end{equation}
With this definition, for any computational-basis state the bound $P^{\mathrm{min}}_i \hat x_i \le \hat y^{2\mathrm{nd}}_i \le P^{\mathrm{max}}_i \hat x_i$ holds.
Here, in the classical SAA/HUBO formulation, second-stage decision variables are introduced for each scenario as $\{b^{2\mathrm{nd}}_{i,s,j}\}$. In contrast, as explained in Sec.~\ref{subsubsec:2nd_stage_form}, the proposed method assigns an equivalent role through a superposition of second-stage-register quantum states conditioned on the scenario basis state $\ket{b_s^\xi}$.

\subsubsection{PV random-variable operator and problem Hamiltonian}
\label{subsubsec:uc_HP}

We define the random-variable operator on the scenario register as
\begin{equation}
  \hat \xi
  =
  \sum_{s=0}^{N-1}\xi_s\,\ket{b_s^\xi}\!\bra{b_s^\xi}
  \label{eq:xi_op_uc}
\end{equation}
We then define the diagonal Hamiltonian corresponding to Eq.~\eqref{eq:HUBO_obj_core} by
\begin{equation}
  H_P := H_P^{1\mathrm{st}} + H_P^{2\mathrm{nd}},
  \qquad
  H_P^{1\mathrm{st}} := \sum_{i=1}^{M} d_i\,\hat x^{1\mathrm{st}}_i
  \label{eq:HP_uc_total}
\end{equation}
\begin{equation}
  H_P^{2\mathrm{nd}}
  :=
  \sum_{i=1}^{M} c_i\,\hat y^{2\mathrm{nd}}_i
  + \lambda\Bigl(D\hat I - \hat\xi - \sum_{i=1}^{M}\hat y^{2\mathrm{nd}}_i\Bigr)^2
  \label{eq:HP_uc_2nd}
\end{equation}
where $\hat I$ denotes the identity operator.
Since $\hat y_i$ in Eq.~\eqref{eq:y_op_uc} is a polynomial in $\hat x_i$ and $\hat b^{2\mathrm{nd}}_{i,j}$, and $\hat x_i=(1-\hat Z^{1\mathrm{st}}_i)/2$ and $\hat b^{2\mathrm{nd}}_{i,j}=(1-\hat Z^{2\mathrm{nd}}_{i,j})/2$, $H_P$ can be expressed 
as a linear combination of Pauli-$Z$ tensor-product terms (Pauli-$Z$ strings).

\section{Results}
\label{sec:result}
In this section, we present numerical results for the proposed qGAN-QAOA method and validate (1) the validity of quantum state preparation for the scenario distribution (qGAN training), (2) the stability of the first-stage solutions obtained by qGAN-QAOA (dependence on initialization), and (3) the effectiveness of the method as a two-stage model from the viewpoint of the expected cost when the obtained solution is fixed.
First, in Section~\ref{subsec:exp_params}, we summarize the parameter settings of the stochastic UCP.
Next, in Section~\ref{subsec:qGAN-acc}, we train qGAN to learn the PV-output distribution and evaluate the agreement between the generated distribution and the discretized real-data distribution using the Jensen--Shannon divergence.
Finally, in Section~\ref{subsec:qganqaoa_results}, with the trained qGAN fixed, we optimize qGAN-QAOA and compare the marginal distribution of first-stage measurement outcomes (interpreted as first-stage solutions) and the expected cost computed using representative solutions with classical baselines (RP and EEV).

\subsection{Parameter Settings}
\label{subsec:exp_params}

In this subsection, we describe in detail the parameter settings used in our numerical experiments for the two-stage stochastic UCP defined in the previous section.

We fix the number of thermal generators to $M=3$ and set the demand to $D=2500\,\mathrm{kWh}$.
The uncertainty in PV output is modeled by a capacity factor (CF) that follows a Beta distribution,\footnote{For example, since the coefficient obtained by normalizing the generation amount by the maximum output has support on $[0,1]$, the Beta distribution is widely used.}
which is commonly employed in PV generation modeling.
Specifically, for each sample $s$, we generate the PV output $\xi_s$ as
\begin{equation}\label{eq:pv_from_cf}
\begin{aligned}
\mathrm{CF}_s &\sim \mathrm{Beta}(\alpha_{\mathrm{PV}}=3,\beta_{\mathrm{PV}}=7)\\
\xi_s &= \xi_{\max}\cdot\mathrm{CF}_s
\end{aligned}
\end{equation}
Under this setting, the mean of the Beta distribution is
\begin{equation}
\mathbb{E}[\mathrm{CF}] = \frac{\alpha_{\mathrm{PV}}}{\alpha_{\mathrm{PV}}+\beta_{\mathrm{PV}}}
= \frac{3}{3+7}=0.3
\label{eq:cf_mean}
\end{equation}
and therefore, by setting the maximum PV output to $\xi_{\max}=2500\,\mathrm{kWh}$, the mean PV output becomes
\begin{equation}
\mathbb{E}[\xi] = \xi_{\max}\cdot\mathbb{E}[\mathrm{CF}] = 2500\times 0.3 = 750\,\mathrm{kWh}
\label{eq:pv_mean}
\end{equation}
Accordingly, we assume that PV output is realized according to Eq.~\eqref{eq:pv_from_cf} over the interval $[\xi_{\min},\xi_{\max}] = [0, 2500]$.

\begin{table}[t]
\centering
\caption{Thermal power generator parameter settings ($M=3$).}
\label{tab:param_thermal_results}
\begin{tabular}{l c c c}
\hline
\hline
 Generator Parameters & Unit 1 & Unit 2 & Unit 3 \\
\hline
$P_i^{\min}$ (kWh) & 300 & 500 & 100 \\
$P_i^{\max}$ (kWh) & 750 & 1000 & 200 \\
$d_i$ (start-up cost: JPY) & 4000 & 5000 & 1000 \\
$c_i$ (generating cost: JPY/kWh) & 15 & 20 & 10 \\
\hline
\hline
\end{tabular}
\end{table}

For ease of implementation in a quantum circuit, the output of each thermal generator $i\in\mathcal{I}=\{1,2,3\}$ is restricted to two output levels.
That is, when generator $i$ is committed, its output can take either $P_i^{\min}$ or $P_i^{\max}$.
The specific generator parameters ($P_i^{\min},P_i^{\max},d_i,c_i$) are listed in Table~\ref{tab:param_thermal_results}.

The imbalance cost $\lambda$ is set to $18$ equally spaced values over the range $\lambda\in[30,200]$.

\subsection{Learning the PV uncertainty distribution using qGAN}
\label{subsec:qGAN-acc}

In this subsection, we train qGAN to learn the uncertainty distribution that the PV generation follows, and evaluate the training accuracy in terms of the agreement between the generated distribution and the real-data distribution.
For the quantum generator, we employ a TwoLocal circuit, and set the repetition depth to $p_{\mathrm{qGAN}}=n_{\xi}$ so as to increase the expressive power according to the number of qubits $n_{\xi}$ in the scenario register (the proposed framework can also be extended to other parameterized quantum circuits within $O(\mathrm{poly}(n_{\xi}))$.
).
As the discriminator, we use a fully connected neural network (MLP).
Specifically, we take as input the probability vector corresponding to the scenario distribution,
$\mathbf{p}=(p_0,\dots,p_{N-1})\in\mathbb{R}^N$,
and adopt an architecture with multiple hidden layers.

For the training data, we generate $N_{\mathrm{data}}=2000$ continuous samples based on Eq.~\eqref{eq:pv_from_cf} and regard them as real data for PV output.
Next, following Eq.~\eqref{eq:disc_rv}, we uniformly discretize the PV-output interval $[\xi_{\min},\xi_{\max}]=[0,2500]$ into
$N=4,8,16,32$ points (with $N=2^{n_{\xi}}$),
and construct a discrete empirical distribution (histogram) by binning the samples into the corresponding intervals.
Here, by changing the random seed, we repeat the generation of real data (continuous samples of size $N_{\mathrm{data}}=2000$) 15 times, and obtain 15 discrete distribution datasets by binning the resulting real data under the same discretization rule.
In what follows, we use 10 of them for qGAN training (train) and the remaining 5 for qGAN evaluation (test), and assess the generalization performance via the agreement on distribution datasets that are not used in training.

In qGAN training, at each epoch we estimate the scenario-generation probability distribution produced by the generator using samples obtained with $10{,}000$ measurement shots, and alternately optimize the discriminator and the generator.
We use Adam for optimization, and set the learning rates to $\eta_g=\eta_d=0.002$ for the generator and discriminator, respectively.
Training is terminated at 400 epochs, and we adopt the parameters at the epoch that attains the maximum agreement on the test data as the trained parameters $\boldsymbol{\theta}^{\ast}$.

As the evaluation metric, we use the agreement score $1-\mathrm{JS}$ based on the Jensen--Shannon divergence (JS divergence).
For probability distributions $P$ and $Q$, it is defined as
\begin{equation}
\mathrm{JS}(P,Q)
=\frac{1}{2}\mathrm{KL}\!\left(P\middle\|M\right)
+\frac{1}{2}\mathrm{KL}\!\left(Q\middle\|M\right),
\quad
M=\frac{1}{2}(P+Q)
\end{equation}
where $\mathrm{KL}(\cdot\|\cdot)$ denotes the Kullback--Leibler divergence.
We then compute the agreement score $1-\mathrm{JS}$ by normalizing $\mathrm{JS}$ so that $\mathrm{JS}\in[0,1]$ holds.

\begin{table}[t]
\centering
\small
\renewcommand{\arraystretch}{1.25}
\setlength{\tabcolsep}{6pt}
\caption{qGAN performance for different scenario resolutions.
We set $p_{\mathrm{qGAN}}=n_\xi=\log_2 N$ in the TwoLocal generator.
For each $N$, we trained qGAN with five random seeds.
We report $1-\mathrm{JS}$ as mean with standard deviation in parentheses
(e.g., $0.99983(3)=0.99983\pm0.00003$).
In subsequent experiments, we use the best-matching model (highest $1-\mathrm{JS}$); hence, the best score is also reported.}
\label{tab:qgan_js_results}
\begin{tabular}{c c c}
\hline\hline
$N$ & \makecell{$1-\mathrm{JS}$ (mean (std.))} & \makecell{$1-\mathrm{JS}$ (best of 5)} \\
\hline
4  & $0.99983(3)$  & 0.99986 \\
8  & $0.99917(28)$ & 0.99942 \\
16 & $0.99807(49)$ & 0.99842 \\
32 & $0.99423(68)$ & 0.99587 \\
\hline\hline
\end{tabular}
\end{table}

Table~\ref{tab:qgan_js_results} summarizes the qGAN performance for each discretization (scenario) size $N=4,8,16,32$.
For each $N$, we train qGAN with five different random seeds.
In this study, we set $p_{\mathrm{qGAN}}=n_{\xi}$, and thus the repetition depth of the generator increases as $N$ increases.
To assess reproducibility, we report the mean and standard deviation of the agreement score $1-\mathrm{JS}$ over the five seeds.
As shown in the table and Fig.~\ref{fig:accuracy_qGAN}, the mean agreement slightly decreases and the standard deviation tends to increase as $N$ increases, suggesting that learning a higher-resolution discrete distribution becomes relatively more challenging.
Nevertheless, a high agreement of $1-\mathrm{JS}\gtrsim 0.993$ is maintained for all settings.
In addition, since we use the trained generator with the best agreement (the highest $1-\mathrm{JS}$) for scenario-state preparation in the subsequent qGAN-QAOA experiments, we also list the best value (best of 5) for reference.
Overall, these results confirm that the proposed method can prepare the scenario-distribution quantum state with sufficient accuracy.

\begin{figure*}[t]
  \centering
  % subcaption の (a) などを消す（overpicで自前表示するため）
  \captionsetup[subfigure]{labelformat=empty}

  % overpic の座標設定
  % 画像のアスペクト比によりますが、タイトルと重ならないよう少し高め(85〜95)に設定するのが一般的です

  % ===== 左側の図 (a): 精度評価 =====
  \begin{subfigure}[t]{.49\linewidth}
    \centering
    % 画像ファイル名は適宜変更してください (例: 精度の図)
    \begin{overpic}[width=\linewidth,percent]{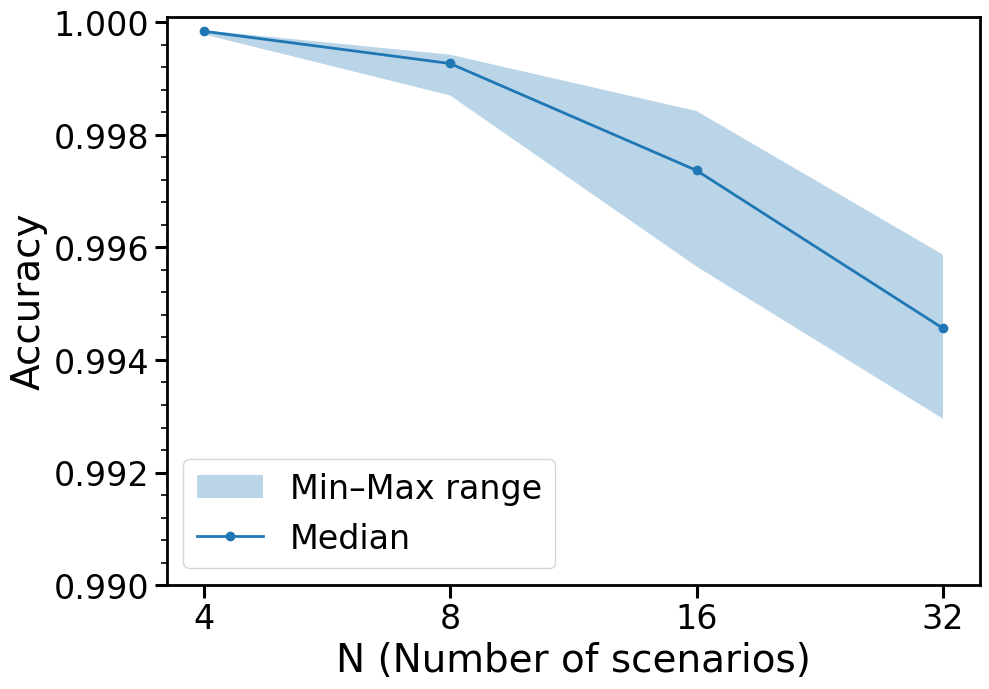}
    \put(2,75){\Large\bfseries (a)}
    \end{overpic}
    \subcaption{}\label{fig:accuracy_qGAN}
  \end{subfigure}\hfill
  % ===== 右側の図 (b): PV分布 =====
  \begin{subfigure}[t]{.49\linewidth}
    \centering
    % 画像ファイル名は適宜変更してください (例: 分布の図)
    \begin{overpic}[width=\linewidth,percent]{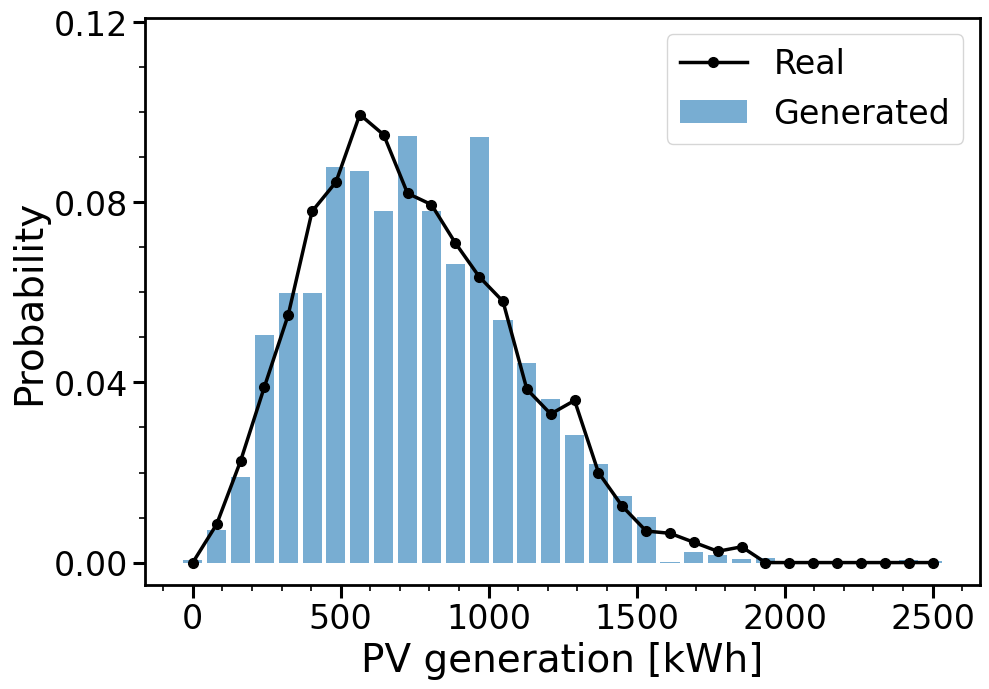}
    \put(2,75){\Large\bfseries (b)}
    \end{overpic}
    \subcaption{}\label{fig:trained_qGAN}
  \end{subfigure}

  \caption{Performance evaluation of the qGAN.
(a) For each discretization (scenario) size $N=4,8,16,32$, we evaluate the agreement score based on the Jensen--Shannon (JS) divergence over five random seeds used to generate the real data; the median and the range from the minimum to the maximum are shown.
(b) Comparison of the PV scenario probability distributions for $N=32$ between the real data (black line) and the generated data (blue line). We report the result with the highest agreement among the five qGAN training seeds.
  }
  \label{fig:qgan_performance}
\end{figure*}

Furthermore, Fig.~\ref{fig:trained_qGAN} shows, as a representative example for $N=32$, the real-data distribution (black line) and the generated distribution after training (bar plot).
The overall distributional shape agrees well, visually confirming that qGAN appropriately learns the uncertainty distribution of PV generation.

\subsection{Performance evaluation of qGAN-QAOA}
\label{subsec:qganqaoa_results}

\begin{figure*}[t]
  \centering
  % subcaption の (a) などを消す（overpicで自前表示するため）
  \captionsetup[subfigure]{labelformat=empty}

  % overpic の座標設定
  % 画像のアスペクト比に合わせて位置(y座標)を微調整してください

  % ===== 左側の図 (a): QAOAの解の分布 =====
  \begin{subfigure}[t]{.49\linewidth}
    \centering
    \begin{overpic}[width=\linewidth,percent]{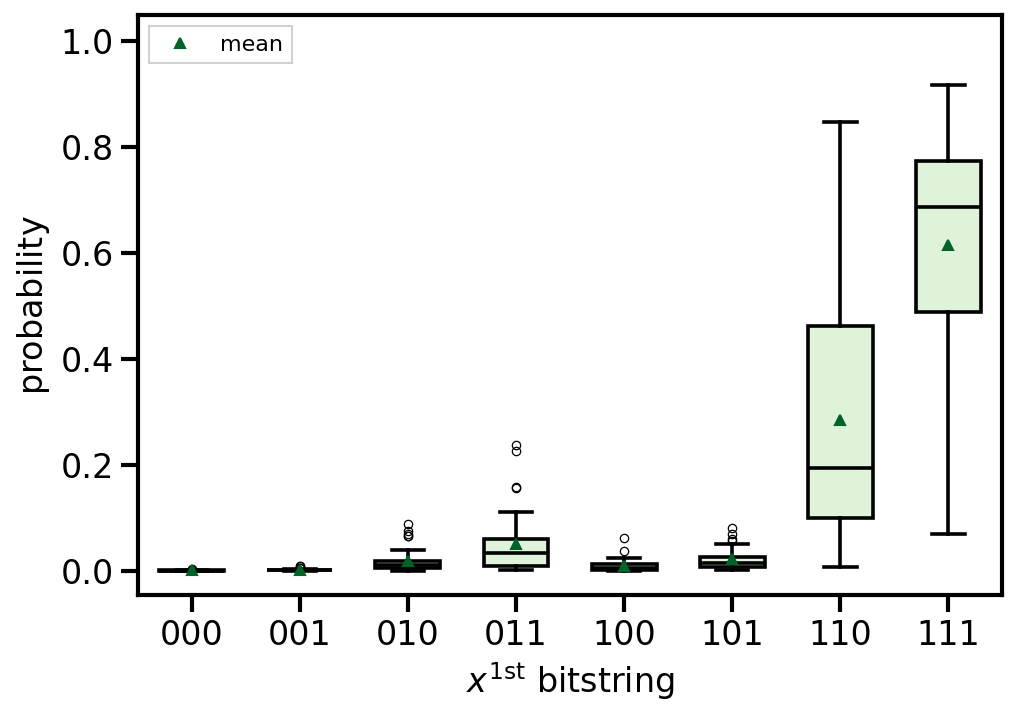}
    \put(2,75){\Large\bfseries (a)}
    \end{overpic}
    \subcaption{}\label{fig:qaoa_solution_dist}
  \end{subfigure}\hfill
  % ===== 右側の図 (b): 目的関数値の比較 =====
  \begin{subfigure}[t]{.49\linewidth}
    \centering
    \begin{overpic}[width=\linewidth,percent]{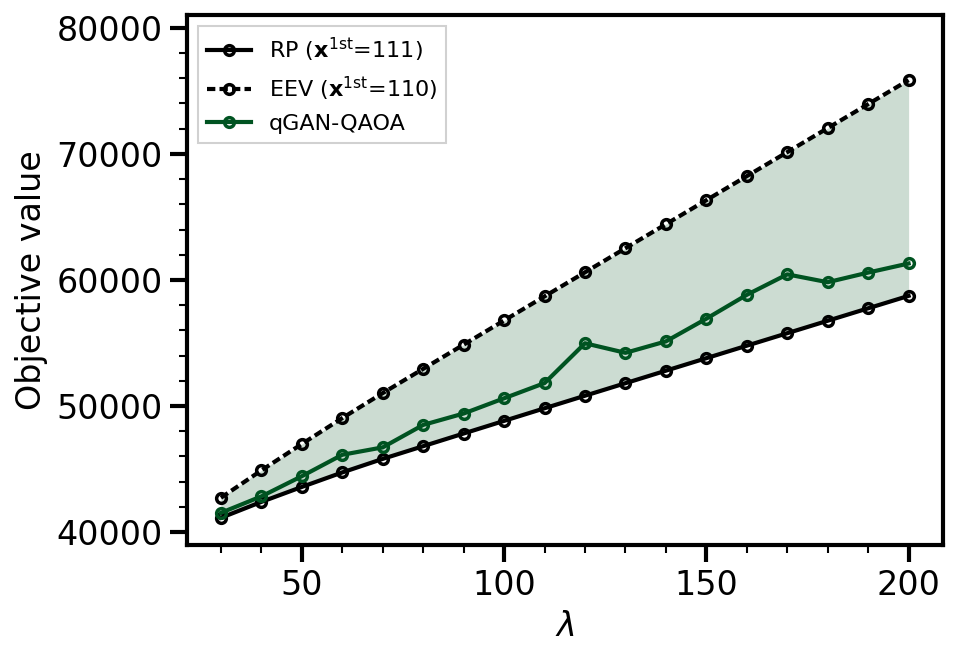}
    \put(2,75){\Large\bfseries (b)}
    \end{overpic}
    \subcaption{}\label{fig:objective_comparison}
  \end{subfigure}
    \caption{
Evaluation of the optimization results obtained by qGAN-QAOA.
(a) Marginal probability distribution of the first-stage measurement outcomes (interpreted as first-stage solutions) $\boldsymbol{x}_{\mathrm{1st}}$ obtained over 40 random seeds ($\lambda=30$).
The horizontal axis denotes the bit string representing $\boldsymbol{x}_{\mathrm{1st}}$, where the bits from left to right indicate the on/off status of generators No.~1, 2, and 3, respectively (on$=1$/off$=0$), and the vertical axis denotes the measurement probability for each seed.
The boxplot summarizes the distribution of the measurement probabilities (over 40 seeds) for each bit string: the box spans the first and third quartiles $Q_1$ and $Q_3$ (with $\mathrm{IQR}=Q_3-Q_1$), and the center line indicates the median.
The whiskers extend to the minimum and maximum data points within the range $[Q_1-1.5\,\mathrm{IQR},\,Q_3+1.5\,\mathrm{IQR}]$, and points outside this range are shown as outliers (circles).
The triangles indicate the mean measurement probability for each bit string.
(b) Comparison of the expected cost as a function of the imbalance cost $\lambda$.
The stochastic-programming baseline RP (black solid line) and EEV (black dashed line) are the evaluation values based on the $L_1$ formulation in Eq.~\eqref{eq:uc_sp_obj_xi} and Eq.~\eqref{eq:eev}, respectively.
For qGAN-QAOA (green), for each $\lambda$ we fix the first-stage solutions obtained from 40 random seeds as representative solutions and compute the expected cost using the $L_1$ two-stage evaluation in Eq.~\eqref{eq:Exp_cost_qGAN-QAOA}; the mean (green solid line) and the range (minimum--maximum, shaded band) are shown.
In the qGAN-QAOA optimization stage, for quantum-implementation reasons, we use the surrogate objective in Eq.~\eqref{eq:HUBO_obj_core}, where $\lvert \sigma \rvert$ is approximated by a quadratic penalty.
    }
    \label{fig:qgan_qaoa_performance}
\end{figure*}

In this subsection, we evaluate the stability (dependence on initialization) of the first-stage solutions $\boldsymbol{x}_{\mathrm{1st}}$ obtained by qGAN-QAOA and the expected cost when the obtained solutions are fixed.

In this numerical experiment, we consider the case where the PV uncertainty is uniformly discretized into $N=32$ scenarios.
That is, we use the qGAN generator circuit that produces the scenario distribution in Fig.~\ref{fig:trained_qGAN}, learned by qGAN with $n_{\xi}=\log_2 N=5$ qubits.
Moreover, since the number of generators is $M=3$, the numbers of qubits in the first-stage register and the second-stage register are
$n_{1\mathrm{st}}=n_{2\mathrm{nd}}=3$.
Therefore, the total number of qubits is
$n=n_{\xi}+n_{1\mathrm{st}}+n_{2\mathrm{nd}}=11$.

In the qGAN-QAOA optimization, to assess dependence on the initial QAOA parameters, for each $\lambda$ we generate $N_{\mathrm{seed}}=40$ random initializations of the QAOA parameters using different random seeds and run the optimization independently from each initialization.
We use Qiskit (version 1.1.1) for implementation and \texttt{AerSimulator} as the simulator.
The number of shots for expectation-value estimation is set to $\texttt{shots}=5\times10^{4}$.
As the classical optimizer, we employ \texttt{COBYLA} with the maximum number of iterations $\texttt{maxiter}=400$, the convergence tolerance $\texttt{tol}=10^{-3}$, and the initial trust-region radius $\texttt{rhobeg}=0.6$.
Among the estimated expectation values evaluated during the optimization, we adopt the parameters that yield the minimum value as the optimal parameters for the corresponding initialization.
In addition, we set the QAOA depths to $p_{1}=p_{2}=4$ for the first-stage and second-stage parts, respectively.

For each random seed, we measure the optimized circuit and obtain the marginal distribution of the first-stage measurement outcomes $\Pr(\boldsymbol{x}_{\mathrm{1st}})$.
In this study, we adopt the most probable outcome (MAP estimate),
\begin{equation}
\boldsymbol{x}_{\mathrm{1st}}^{\mathrm{opt}}
:= \arg\max_{\boldsymbol{x}_{\mathrm{1st}}\in\{0,1\}^{M}} \Pr(\boldsymbol{x}_{\mathrm{1st}})
\end{equation}
as a representative solution (if there are multiple ties for the maximum, we select one of them).

Fig.~\ref{fig:qaoa_solution_dist} shows, at $\lambda=30$, the marginal distribution of first-stage measurement outcomes (interpreted as first-stage solutions) $\boldsymbol{x}_{\mathrm{1st}}$ obtained over 40 random seeds.
The horizontal axis denotes the bit string representing $\boldsymbol{x}_{\mathrm{1st}}$, where the bits from left to right indicate the on/off status of generators No.~1, 2, and 3, respectively (on$=1$/off$=0$), and the vertical axis denotes the measurement probability for each seed.
The resulting solutions are mainly concentrated on two candidates, $\boldsymbol{x}_{\mathrm{1st}}=110$ and $\boldsymbol{x}_{\mathrm{1st}}=111$, while the other bit strings appear with low probability.
That is, although the local optimum to which the optimization converges can change depending on the initial parameter values, the first-stage solutions are strongly aggregated into a small set of candidates.

In particular, since the expected PV generation is $\mathbb{E}[\xi]=750\,\mathrm{kWh}$ from Eq.~\eqref{eq:pv_mean}, the expected-value problem (EV) naturally yields $\boldsymbol{x}_{\mathrm{1st}}=110$ as a plausible candidate when we consider the net demand $D-\mathbb{E}[\xi]=1{,}750\,\mathrm{kWh}$ as a reference.
On the other hand, taking into account that the scenario distribution learned by qGAN (Fig.~\ref{fig:trained_qGAN}) has a right tail, a more conservative solution $\boldsymbol{x}_{\mathrm{1st}}=111$ can be selected to suppress violations (penalties) of the supply--demand constraint in scenarios where PV generation falls below its expectation.
The concentration on these two candidates in Fig.~\ref{fig:qaoa_solution_dist} suggests that the proposed method narrows down to a few promising solution candidates under uncertainty.

Next, we evaluate the effectiveness of the solutions $\boldsymbol{x}_{\mathrm{1st}}^{\mathrm{opt}}$ obtained by the qGAN-QAOA method in terms of the expected cost.
Here, the quantity referred to as the expected cost is the two-stage evaluation value based on the original $L_1$ objective in Eq.~\eqref{eq:uc_sp_obj_xi}, and it should be distinguished from the expected value of the surrogate objective in Eq.~\eqref{eq:HUBO_obj_core} that is minimized in the quantum circuit.
\begin{equation}\label{eq:Exp_cost_qGAN-QAOA}
\begin{aligned}
\mathrm{C}(\boldsymbol{x}_{\mathrm{1st}}^{\mathrm{opt}})
&:=
\sum_{i=1}^{M} d_i x_{\mathrm{1st},i}^{\mathrm{opt}}
+\sum_{s=0}^{N_{\mathrm{test}}-1} \tilde{p}_s
\min_{\boldsymbol{y}_s \in \mathcal{Y}(\boldsymbol{x}_{\mathrm{1st}}^{\mathrm{opt}})}
\Bigg{\{}
\sum_{i=1}^{M} c_i y_{i,s}
+\lambda |\sigma_s |
\Bigg{\}},\\
\mathcal{Y}(\boldsymbol{x}_{\mathrm{1st}}^{\mathrm{opt}})
&:=
\big{\{}
\boldsymbol{y}_s \in \mathbb{R}_{\ge 0}^{M} \big| \, 
\sigma_s = D-\tilde{\xi}_s-\sum_{i=1}^{M} y_{i,s},
\\
&\qquad\qquad\qquad\;\;\, y_{i,s} \in \left\{x_{\mathrm{1st},i}^{\mathrm{opt}} P_i^{\min},\ x_{\mathrm{1st},i}^{\mathrm{opt}} P_i^{\max}\right\}
\; (\forall i=1,\dots,M)
\big{\}}
\end{aligned}
\end{equation}
In this study, for qGAN training and the qGAN-QAOA optimization, we use the uniformly discretized scenario distribution with $N=32$ scenarios, $\{(\xi_s,p_s)\}_{s=0}^{N-1}$ (the distribution for qGAN training and optimization).
In contrast, to evaluate the performance of the first-stage solutions obtained by the proposed method (i.e., to compute the expected cost), we replace it---independently of the discretized distribution used for training---with a representative test-scenario set $\{(\tilde{\xi}_s,\tilde{p}_s)\}_{s=0}^{N_{\mathrm{test}}-1}$ of size $N_{\mathrm{test}}=200$, which is extracted from $N_{\mathrm{data}}$ real-data samples via the quantile method, and compute the expectation in Eq.~\eqref{eq:Exp_cost_qGAN-QAOA}.
Here, we assign equal probabilities $\tilde{p}_s=1/N_{\mathrm{test}}$.
By separating the scenario sets used for training/optimization and for evaluation in this manner, we can examine whether the solutions obtained by the proposed method are not overly fitted to the training distribution (generalization ability).

As classical baselines, we use the expected cost obtained by solving the stochastic program (RP: Recourse Problem) and the expected cost obtained by fixing the solution from the expected-value problem (EV) and evaluating it (EEV: Expected Value of the Expected Value solution).
RP is the objective value obtained by solving the stochastic program, i.e., the optimal value of Eq.~\eqref{eq:uc_sp_obj_xi}.
On the other hand, the expected-value problem (EV) is defined using the expected PV generation
$\mathbb{E}[\xi] \approx \frac{1}{N_{\mathrm{test}}}\sum_{s=0}^{N_{\mathrm{test}}-1} \tilde{\xi}_s$ as
\begin{equation}\label{eq:EV_problem}
    \begin{aligned}
    (\boldsymbol{x}_{\mathrm{1st}}^{\mathrm{EV}},\, \boldsymbol{y}^{\mathrm{EV}})
    \in
    &\arg
    \min_{\substack{\boldsymbol{x}_{\mathrm{1st}}\in\{0,1\}^{M}\\ \boldsymbol{y}\in\mathbb{R}_{\ge 0}^{M}}}
    \Bigg{\{}
    \sum_{i=1}^{M} d_i x_{{\mathrm{1st}},i}
    +\sum_{i=1}^{M} c_i y_i
    +\lambda |\sigma|
    \Bigg{\}} \\
    &
    \qquad
    \mathrm{s.t.} \;
    \sigma=D-\mathbb{E}[\xi]-\sum_{i=1}^{M} y_i\\
    &
    \quad\quad\quad\;\; y_{i} \in \left\{x_{\mathrm{1st},i} P_i^{\min},\ x_{\mathrm{1st},i} P_i^{\max}\right\},
    \quad \forall i\\
    \end{aligned}
\end{equation}
Furthermore, EEV is defined by fixing the first-stage solution $\boldsymbol{x}_{\mathrm{1st}}^{\mathrm{EV}}$ obtained by EV, evaluating the cost in each scenario, and then taking the expectation:
\begin{equation}\label{eq:eev}
\begin{aligned}
\mathrm{EEV}
&:=
\sum_{i=1}^{M} d_i x_{\mathrm{1st},i}^{\mathrm{EV}}
+\sum_{s=0}^{N_{\mathrm{test}}-1} 
\tilde{p}_s \;
\min_{\boldsymbol{y}_s \in \mathcal{Y}(x_{\mathrm{1st}}^{\mathrm{EV}})}
\Bigg\{
\sum_{i=1}^{M} c_i y_{i,s}
+\lambda |\sigma_s|
\Bigg\}\\
&\;
\;\qquad \qquad \mathrm{s.t.} \;
\sigma_s=D-\tilde{\xi}_s-\sum_{i=1}^{M} y_{i,s},\ \forall s
\end{aligned}
\end{equation}
Here, RP/EEV are also computed using the evaluation scenario set {$(\tilde\xi_s,\tilde p_s)$}.

Fig.~\ref{fig:objective_comparison} compares the expected cost as a function of the imbalance cost $\lambda$.
In addition to RP (black solid line) and EEV (black dashed line), for qGAN-QAOA (green) we evaluate the expected cost by fixing $\boldsymbol{x}^{\mathrm{1st}}$ obtained from 40 random seeds for each $\lambda$, and report the mean (green solid line) together with the range (minimum--maximum, shaded band).
As shown in Fig.~\ref{fig:objective_comparison}, as $\lambda$ increases, EEV tends to deviate from RP, whereas qGAN-QAOA tends to yield an expected cost closer to RP.
The width of the shaded band is attributed to dependence on the initial parameter values, yet the mean value consistently remains close to RP over a wide range of $\lambda$,
indicating that the proposed method is effective as a two-stage stochastic-programming decision-making approach that accounts for uncertainty.

The above results indicate that (i) qGAN can approximate the PV distribution in this case with high agreement, (ii) qGAN-QAOA outputs a probability distribution that concentrates the first-stage solutions into a small set of promising candidates, and (iii) the expected cost of the representative solution tends to be closer to RP than to EEV.
That is, the proposed method provides a consistent implementation candidate for executing two-stage decision-making on a single variational quantum circuit, where uncertainty is treated as a distribution rather than a point estimate (expectation).
On the other hand, note that the above comparison is performed on a simulator, and that the expected-cost evaluation uses an evaluation scenario set that is separate from the training scenario set.

\section{Discussion}\label{sec:discussion} 

In this section, we discuss the scaling of the quantum-circuit size of the proposed qGAN-QAOA method, namely, the scaling of the gate count and circuit depth.
The goal of the proposed method is to reduce the dependence on the scenario count $N$ that appears in classical two-stage stochastic programming based on SAA when computing expectations.
In particular, in Section~\ref{subsec:WHT_Pauli}, under uniform discretization of a continuous uncertainty, we show that the random-variable operator $\hat{\xi}$ admits a Pauli-$Z$ expansion based on the Walsh--Hadamard transform, and that the number of Pauli-$Z$ terms arising from $\hat{\xi}$ can be bounded by $O(\mathrm{poly}(\log N))$, thereby mitigating the dependence on $N$.
In the subsequent Section~\ref{subsec:computational_complexity}, we evaluate the computational complexity when the discretization error and the expectation-estimation error are both set to $O(\varepsilon)$, and compare the dominant per-iteration complexity terms of classical two-stage stochastic programming based on SAA with those of the proposed method.

\subsection{Walsh--Hadamard based Pauli expansion and circuit scaling}
\label{subsec:WHT_Pauli}

In this subsection, we show that the random-variable operator in Eq.~\eqref{eq:RandomVariableOperator} can be represented as a linear combination of Pauli-$Z$ strings via the Walsh--Hadamard transform (WHT).
Note that representations and circuit synthesis (i.e., compiling the operator into a gate sequence) of diagonal operators based on the WHT have already been systematized in prior work~\cite{Welch2014DiagonalWalsh}.
More recently, studies have also been reported that interpret the computation of Pauli-expansion coefficients as the (fast) Walsh--Hadamard transform and compute them efficiently~\cite{Georges2025PauliFWHT}.
The main focus of this paper is not to propose a new synthesis method for general diagonal operators, but rather to clarify that, for the scenario register appearing in two-stage stochastic programming, the discretization structure---in particular, uniform discretization---makes the WHT coefficients sparse and, as a result, can suppress the dependence of circuit cost on the scenario count $N$.

In what follows, we show in particular that, when the realizations are given by an arithmetic progression, the number of required Pauli-$Z$ terms can be reduced to $n_\xi+1=O(\log N)$.
By comparison with an existing scenario-projector-based implementation~\cite{rotello2024_expected_value}, we then discuss how the dependence on the scenario count $N$ can be improved.

\subsubsection{Pauli strings and Walsh--Hadamard transform}

Let the computational basis of the scenario register consisting of $n_\xi$ qubits be
\begin{equation*}
  \{\ket{b^\xi_s}\}_{s=0}^{2^{n_\xi}-1},\quad
  b^\xi_s = b^\xi_{s,n_\xi-1}\dots b^\xi_{s,1} b^\xi_{s,0} \in \{0,1\}^{n_\xi}
\end{equation*}
The Pauli-$Z$ operator satisfies
\begin{equation*}
  Z\ket{0} = \ket{0},\quad
  Z\ket{1} = -\ket{1}
\end{equation*}
and thus, for a bit $b^\xi_{s,i} \in \{0,1\}$, we can write
\begin{equation*}
  Z^{b^\xi_{s,i}}\ket{b^\xi_{s,i}} = (-1)^{b^\xi_{s,i}}\ket{b^\xi_{s,i}}
\end{equation*}
For an $n_\xi$-bit string $j = j_{n_\xi-1}\dots j_1 j_0 \in \{0,1\}^{n_\xi}$, define a Pauli string composed of tensor products of $Z$ as
\begin{equation*}
  \hat{P}_j := Z^{j_{n_\xi-1}}\otimes \cdots \otimes Z^{j_1}\otimes Z^{j_0}
\end{equation*}
Then, the action of $\hat{P}_j$ can be written as
\begin{align*}
  \hat{P}_j \ket{b^\xi_s}
  &= \bigotimes_{i=0}^{n_\xi-1} Z^{j_i} \ket{b^\xi_{s,i}} \\
  &= \bigotimes_{i=0}^{n_\xi-1} (-1)^{j_i b^\xi_{s,i}} \ket{b^\xi_{s,i}} \\
  &= (-1)^{j\cdot s} \ket{b^\xi_s}
\end{align*}
where
\begin{equation*}
  j\cdot s := \sum_{i=0}^{n_\xi-1} j_i b^\xi_{s,i} \pmod 2
\end{equation*}
is the bitwise inner product over $\mathbb{Z}_2^{n_\xi}$.

Next, define the $2^{n_\xi} \times 2^{n_\xi}$ matrix $H_{n_\xi}$ by
\begin{equation*}
  (H_{n_\xi})_{j,s} := (-1)^{j\cdot s}, \quad j,s \in \{0,\dots,2^{n_\xi}-1\}
\end{equation*}
Then, $H_{n_\xi}$ is the Walsh--Hadamard matrix given by the tensor product of the single-qubit Hadamard transform $H_1$:
\begin{equation*}
  H_{n_\xi} = \bigotimes_{i=1}^{n_\xi} H_1, \quad
  H_1 = \begin{pmatrix} 1 & 1 \\ 1 & -1 \end{pmatrix}
\end{equation*}

\subsubsection{Pauli expansion of diagonal operators}

Consider the random-variable operator defined by a diagonal matrix of size $2^{n_\xi} \times 2^{n_\xi}$:
\[
  \hat{\xi} = \mathrm{diag}(\xi_0,\xi_1,\dots,\xi_{2^{n_\xi}-1})
\]
Define the column vector obtained by listing its diagonal entries from top to bottom as
\[
  \boldsymbol{\xi}
  :=
  \begin{pmatrix}
    \xi_0 \\ \xi_1 \\ \vdots \\ \xi_{2^{n_\xi}-1}
  \end{pmatrix}
\in \mathbb{R}^{2^{n_\xi}}
\]
Similarly, introduce the column vector collecting the expansion coefficients of the Pauli-$Z$ strings $\hat{P}_j$:
\[
  \boldsymbol{c}
  :=
  \begin{pmatrix}
    c_0 \\ c_1 \\ \vdots \\ c_{2^{n_\xi}-1}
  \end{pmatrix}
\in \mathbb{R}^{2^{n_\xi}}
\]
and write the Pauli-$Z$ expansion of $\hat{\xi}$ as
\begin{equation}
  \hat{\xi}
  = \sum_{j=0}^{2^{n_\xi}-1} c_j \hat{P}_j
  \label{eq:omega_Pauli_expansion}
\end{equation}
Using $\hat{\xi}\ket{b^\xi_s} = \xi_s \ket{b^\xi_s}$ and $\hat{P}_j\ket{b^\xi_s} = (-1)^{j\cdot s}\ket{b^\xi_s}$, we obtain
\[
  \xi_s
  = \sum_{j=0}^{2^{n_\xi}-1} c_j (-1)^{j\cdot s}
\]
In vector form, this is
\[
  \boldsymbol{\xi} = H_{n_\xi} \boldsymbol{c}
\]
and therefore,
\begin{equation}
  \boldsymbol{c} = \frac{1}{2^{n_\xi}} H_{n_\xi} \boldsymbol{\xi}
  \label{eq:WHT_coeff_general}
\end{equation}
That is, the Pauli-$Z$ expansion coefficients of the random-variable operator $\hat{\xi}$ are uniquely determined by the Walsh--Hadamard transform of the diagonal-entry vector.
(The coefficient computation itself can be performed in $O(N\log N)$ via the FWHT~\cite{Georges2025PauliFWHT}.)

\subsubsection{Scaling for number of scenarios}
\label{subsubsec:Scaling_scenarios}

As expressed in Eq.~\eqref{eq:disc_rv}, when the scenario realizations $\xi_s$ are given by an equally spaced arithmetic progression, many of the WHT coefficients of the random-variable operator become zero, and the number of nonzero coefficients can be reduced to $n_\xi+1 = O(\log N)$.
Note that this sparsity is a property of the random-variable operator $\hat{\xi}$ and its low-degree polynomials, and it does not claim that the entire problem Hamiltonian in a general two-stage stochastic program is always sparse to a comparable extent.

\begin{proposition}
  \label{prop:arithmetic_WHT}
  Let the number of scenarios be $N = 2^{n_\xi}$, and suppose that the realization $\xi_s$ of the random variable corresponding to scenario $s\in\{0,\dots,N-1\}$ is given by
  \begin{equation}\label{eq:arithmetic_seq}
    \xi_s = \xi_{\min} + s\,\Delta\xi,\quad
    \Delta\xi = \frac{\xi_{\max}-\xi_{\min}}{N-1}
  \end{equation}
  Let $\boldsymbol{\xi}$ be the vector determined by Eq.~\eqref{eq:arithmetic_seq},
  \[
    \boldsymbol{\xi}
    =
    \begin{pmatrix}
      \xi_0 \\ \xi_1 \\ \vdots \\ \xi_{N-1}
    \end{pmatrix}
    \in \mathbb{R}^{N}
  \]
  and consider the Walsh--Hadamard transform coefficients defined by Eq.~\eqref{eq:WHT_coeff_general},
  \[
    c_j
    = \frac{1}{2^{n_\xi}} \sum_{s=0}^{2^{n_\xi}-1} (-1)^{j\cdot s} \,\xi_s,
    \quad j=0,\dots,2^{n_\xi}-1
  \]
  Let the Hamming weight of the binary representation $j = \sum_{i=0}^{n_\xi-1} j_i 2^i$ be written as
  \[
    w(j) := \sum_{i=0}^{n_\xi-1} j_i
  \]
  Then, the coefficient $c_j$ satisfies
  \begin{equation}\label{eq:WHT_coeff_arith}
    c_j =
    \begin{cases}
      \displaystyle
      \xi_{\min}
      + \Delta\xi\,\frac{2^{n_\xi}-1}{2}
      &(w(j)=0)\\
      \displaystyle
      -\,\Delta\xi\,\frac{j}{2}
      &(w(j)=1)\\
      0
      &(w(j)\ge 2)
    \end{cases}
  \end{equation}
  In particular, when $w(j)=1$, we have $j=2^i$ ($i=0,1,\dots,n_\xi-1$), in which case $c_j = -\,\Delta\xi\,2^{i-1}$.
  Therefore, the nonzero coefficients are restricted to $j=0$ and $j=2^{i}$ ($i=0,1,\dots,n_\xi-1$), and their number is $n_\xi+1$.
\end{proposition}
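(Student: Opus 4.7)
The plan is to split the affine structure of $\xi_s$ and reduce the claim to a purely combinatorial identity for the Walsh--Hadamard transform of the identity sequence $s\mapsto s$. Writing $\xi_s=\xi_{\min}+s\,\Delta\xi$ and using linearity of $c_j$ in $\boldsymbol{\xi}$, I would decompose
\begin{equation*}
  c_j \;=\; \xi_{\min}\,a_j \;+\; \Delta\xi\,b_j,
  \qquad
  a_j := \frac{1}{2^{n_\xi}}\sum_{s=0}^{2^{n_\xi}-1}(-1)^{j\cdot s},
  \quad
  b_j := \frac{1}{2^{n_\xi}}\sum_{s=0}^{2^{n_\xi}-1}(-1)^{j\cdot s}\,s.
\end{equation*}
The coefficients $a_j$ are immediate from orthogonality of Walsh characters: $a_0=1$ and $a_j=0$ for $j\neq 0$. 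Hence the $\xi_{\min}$-contribution is confined to $j=0$, matching the constant term in the stated formula.

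Next I would evaluate $b_j$ by bit decomposition. Writing $s=\sum_{i=0}^{n_\xi-1} 2^i s_i$ gives
\begin{equation*}
  b_j \;=\; \frac{1}{2^{n_\xi}}\sum_{i=0}^{n_\xi-1} 2^i
  \sum_{s_0,\dots,s_{n_\xi-1}\in\{0,1\}} s_i \prod_{k=0}^{n_\xi-1}(-1)^{j_k s_k}.
\end{equation*}
The inner sum factorizes across bits: for each $k\neq i$ the single-bit sum is $1+(-1)^{j_k}$, which equals $2$ if $j_k=0$ and $0$ if $j_k=1$; for the $i$-th bit the sum is $\sum_{s_i}s_i(-1)^{j_i s_i}=(-1)^{j_i}$. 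Thus the $i$-th term vanishes unless $j$ is supported on the single bit $i$ (i.e.\ $j\in\{0,2^i\}$), in which case the product of the other single-bit sums equals $2^{n_\xi-1}$.

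Collecting cases, for $w(j)\ge 2$ every term vanishes, giving $c_j=0$. For $j=2^i$ only the $i$-th term survives, yielding $b_{2^i}=-2^{i-1}$ and therefore $c_{2^i}=-\Delta\xi\,2^{i-1}=-\Delta\xi\,j/2$. For $j=0$ every $i$-th term contributes $2^i\cdot 2^{n_\xi-1}$, summing to $2^{n_\xi-1}(2^{n_\xi}-1)$, whence $b_0=(2^{n_\xi}-1)/2$ and the $j=0$ coefficient combines with the $\xi_{\min}$-contribution to give the stated value. Counting the nonzero indices yields $n_\xi+1$.

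I do not anticipate a serious obstacle: the argument is a direct bit-factorization once the problem is split into the constant and linear parts of the arithmetic progression. The only subtle point is keeping the accounting consistent between the multi-index $j\in\{0,\dots,2^{n_\xi}-1\}$ and its binary representation, in particular using $j=2^i$ to rewrite $b_{2^i}=-2^{i-1}$ as $-j/2$; I would set notation carefully at the beginning to avoid confusion between the integer $j$ and the bit vector $(j_0,\dots,j_{n_\xi-1})$ appearing in the inner product $j\cdot s$.
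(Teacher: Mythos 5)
Your proposal is correct and follows essentially the same route as the paper's proof: split $c_j$ into the constant ($\xi_{\min}$) and linear ($s\,\Delta\xi$) contributions, dispose of the constant part by Walsh-character orthogonality, and evaluate the linear part by bit decomposition of $s$, where your explicit product factorization $\prod_{k\neq i}\bigl(1+(-1)^{j_k}\bigr)\cdot(-1)^{j_i}$ is just a compact rendering of the paper's separation of the $i$-th bit from the remaining bits. All case evaluations ($b_0=(2^{n_\xi}-1)/2$, $b_{2^i}=-2^{i-1}$, vanishing for $w(j)\ge 2$) match the paper's, so no gap remains.
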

\noindent A proof of this proposition is provided in Appendix~\ref{app:proof_arithmetic_WHT}.

From Proposition~\ref{prop:arithmetic_WHT}, when the realizations of the discrete random variable are given by an arithmetic progression, the random-variable operator can be represented using only $n_\xi+1$ terms, namely, the identity matrix $I$ and $n_\xi$ single-qubit Pauli-$Z$ operators.
Within the existing WHT/Walsh-series-based framework for diagonal-operator representations (e.g.,~\cite{Welch2014DiagonalWalsh}), this result explicitly shows that the discretization scheme considered in this study yields a particularly sparse spectrum.

In Rotello et al.'s scenario-projector-based approach~\cite{rotello2024_expected_value}, the problem Hamiltonian $H_{P}$ is given, using the projector $\ket{\xi_s}\!\bra{\xi_s}$ onto the computational basis state $\ket{\xi_s}$ corresponding to scenario $\xi_s$ as
\begin{equation}
  H_{P}
  = \sum_{s=0}^{N-1} H_{\mathrm{QAOA}}(\hat{Z},\xi_s) \otimes \ket{\xi_s}\!\bra{\xi_s}
  \label{eq:rotello_HP}
\end{equation}
Here, $H_{\mathrm{QAOA}}(\hat{Z},\xi_s)$ is a QAOA-type cost Hamiltonian corresponding to the second-stage problem under scenario $\xi_s$, and is expressed as a polynomial in the Pauli operators $\hat{Z}$ acting on the decision-variable registers.
In this formulation, the projectors require embedding a scenario-dependent Hamiltonian $H_{\mathrm{QAOA}}(\hat{Z},\xi_s)$ for each scenario $s$ into the circuit, so that the number of terms with respect to the scenario count $N$ grows at least as $O(N)$.
Moreover, the projector $\ket{\xi_s}\!\bra{\xi_s}$ is generally implemented as a multi-controlled operation on $n_\xi=\log N$ qubits, and its decomposition into two-qubit gates requires many controlled gates, leading to a substantial implementation burden through increased circuit depth and gate count.

In contrast, in the proposed WHT-based formulation, the random-variable operator can be represented as a linear combination of Pauli-$Z$ operators depending on the structure of the scenario realizations.
In particular, under the condition that the realizations form an arithmetic progression, Proposition~\ref{prop:arithmetic_WHT} shows that the random-variable operator $\hat{\xi}$ can be expressed using only $\log N+1$ Pauli-$Z$ terms (the identity matrix $I$ and $n_\xi=\log N$ single-qubit $Z$ operators).
Here, consider the case where the problem Hamiltonian $H_P$ is given, as in Eq.~\eqref{eq:HP_uc_2nd}, by polynomials in $\hat{\xi},\;\hat{\boldsymbol{x}}^{1\mathrm{st}},\;\hat{\boldsymbol{y}}^{2\mathrm{nd}}$ with degrees $\deg_\xi,\;\deg_x,\;\deg_y\leq2$.
Then, for the total number of qubits $n=n_{\xi}+n_{1\mathrm{st}}+n_{2\mathrm{nd}}$, we can express $H_P^{2\mathrm{nd}}$ as a linear combination of Pauli-$Z$ strings $\hat P_j$:
\begin{equation}
  H_P^{2\mathrm{nd}}
  =
  \sum_{j=1}^{L} w_j \hat P_j
  \;+\;\mathrm{const.},
  \quad
  \hat P_j \in \{\hat I,\hat Z\}^{\otimes n}
  \label{eq:HP2_pauliZ_linear_comb}
\end{equation}
where $L$ denotes the number of Pauli-$Z$ strings.
By exploiting the sparse representation based on uniform discretization and the WHT in this paper, the term count can be bounded by $O(\mathrm{poly}(n))$.
Furthermore, for a Pauli-$Z$ string of weight $m\leq n$,
$\hat Z_{\alpha_1}\hat Z_{\alpha_2}\cdots \hat Z_{\alpha_m}$, the corresponding exponential operator can be implemented using $O(m)$ one- and two-qubit gates by applying CNOT and $R_Z$ gates~\cite{fleury2025}:
\begin{equation}
\label{eq:pauliZ_string_synthesis}
\begin{aligned}
&\exp\!\bigl(-i t\, \hat Z_{\alpha_1}\hat Z_{\alpha_2}\cdots \hat Z_{\alpha_m}\bigr)\\
&=
\Biggl(\prod_{\ell=m}^{2} \mathrm{CX}_{\alpha_\ell,\alpha_{\ell-1}}\Biggr)\,
R_{Z,\alpha_1}(2t)\,
\Biggl(\prod_{\ell=2}^{m} \mathrm{CX}_{\alpha_\ell,\alpha_{\ell-1}}\Biggr)   
\end{aligned}
\end{equation}
Therefore, the problem-Hamiltonian gate $\exp(-i\gamma H_P^{2\mathrm{nd}})$ can be implemented as a product of commuting terms, and the gate count and circuit depth scale approximately as $\sum_{j=1}^{L} O(\mathrm{wt}(\hat P_j))$, i.e., proportionally to $O(\mathrm{poly}(n))$.
Focusing on the dependence on the scenario count $N$, whereas the projector-based construction increases as $O(N)$, in the proposed construction the number of Pauli-$Z$ terms arising from $\hat{\xi}$, which appear when replacing each power of $\hat{\xi}$ by its Pauli expansion, is bounded by $O(\deg_\xi\,\log N)$, yielding
\begin{equation}\label{eq:scaling_qGAN-QAOA}
    O(\deg_\xi\,\log N)=O(\mathrm{poly}(\log N))
\end{equation}
As the number and dimension of the decision variables increase, they can dominate the overall circuit size; however, as discussed in Section~\ref{subsec:computational_complexity}, with respect to the bottleneck induced by increasing the scenario count, the proposed method provides more favorable scaling behavior from the viewpoint of computational complexity.

\subsubsection{Quantum circuit scaling in stochastic UCP}
\label{subsubsec:scaling_stochastic_UCP}

\begin{figure*}[t]
  \centering
  % subcaption の (a) などを消す（overpicで自前表示するため）
  \captionsetup[subfigure]{labelformat=empty}

  % overpic の座標を "画像サイズに追従" させる（0〜100で指定）
  % 92 → 86 あたりに下げる（好みで 84〜88 を調整）

  % ===== 1段目（4枚） =====
  \begin{subfigure}[t]{.245\linewidth}
    \centering
    \begin{overpic}[width=\linewidth,percent]{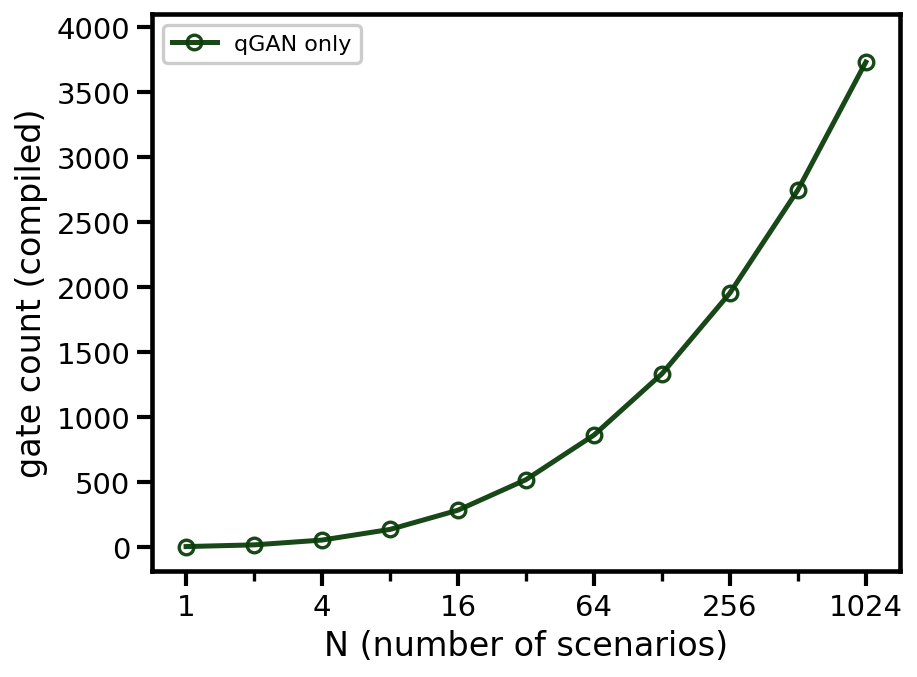}
      \put(2,80){\Large\bfseries (a)}
    \end{overpic}
    \subcaption{}\label{fig:qgan_only_gate}
  \end{subfigure}\hfill
  \begin{subfigure}[t]{.245\linewidth}
    \centering
    \begin{overpic}[width=\linewidth,percent]{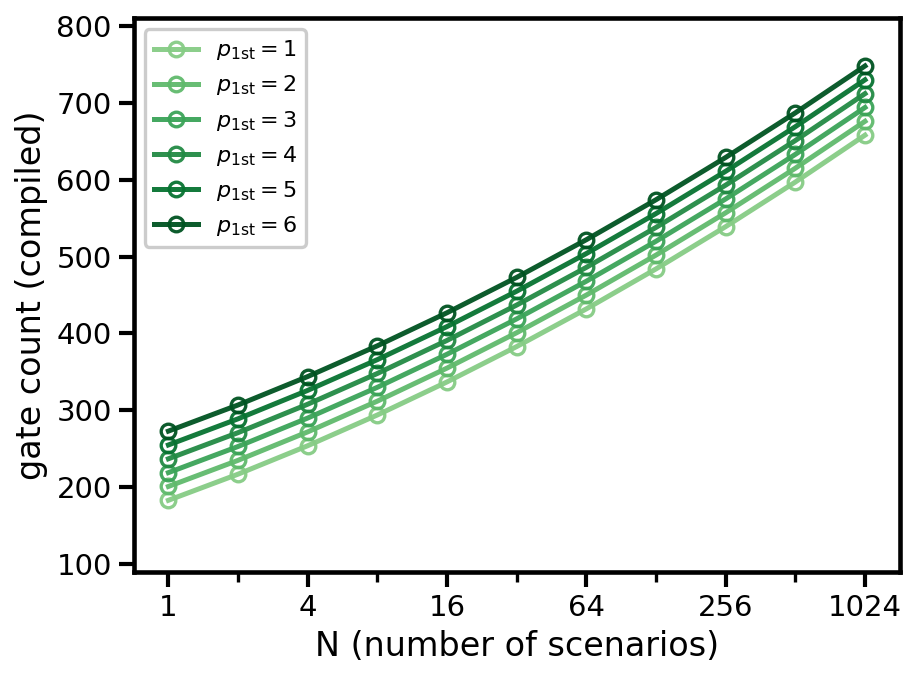}
      \put(2,80){\Large\bfseries (b)}
    \end{overpic}
    \subcaption{}\label{fig:no_qgan_p1_gate}
  \end{subfigure}\hfill
  \begin{subfigure}[t]{.245\linewidth}
    \centering
    \begin{overpic}[width=\linewidth,percent]{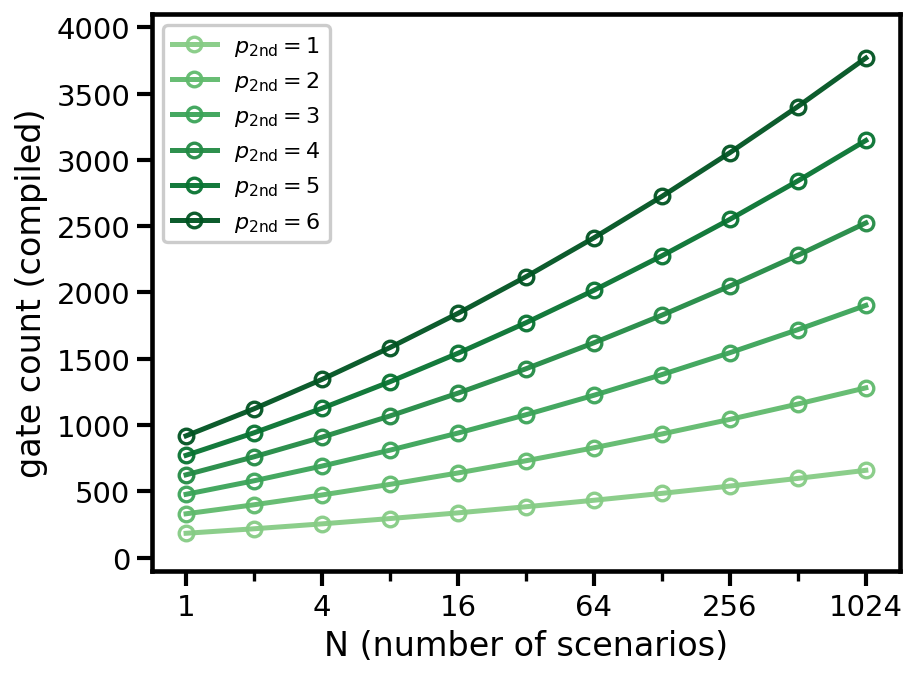}
      \put(2,80){\Large\bfseries (c)}
    \end{overpic}
    \subcaption{}\label{fig:no_qgan_p2_gate}
  \end{subfigure}\hfill
  \begin{subfigure}[t]{.245\linewidth}
    \centering
    \begin{overpic}[width=\linewidth,percent]{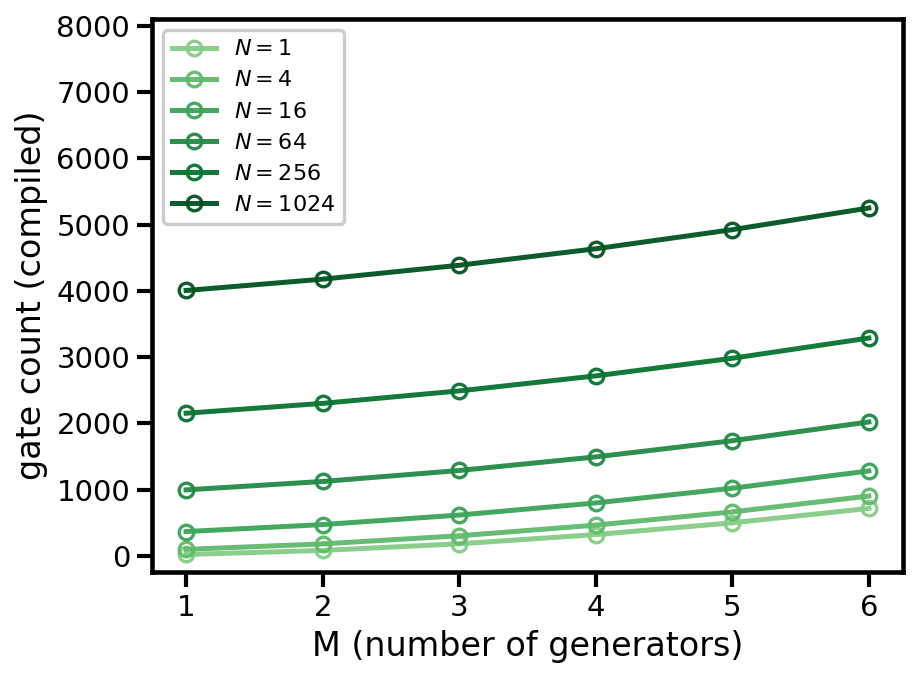}
      \put(2,80){\Large\bfseries (d)}
    \end{overpic}
    \subcaption{}\label{fig:m_scale_gate}
  \end{subfigure}

  \vspace{2mm} % 段間（好みで 1mm〜4mm 程度）

  % ===== 2段目（4枚） =====
  \begin{subfigure}[t]{.245\linewidth}
    \centering
    \begin{overpic}[width=\linewidth,percent]{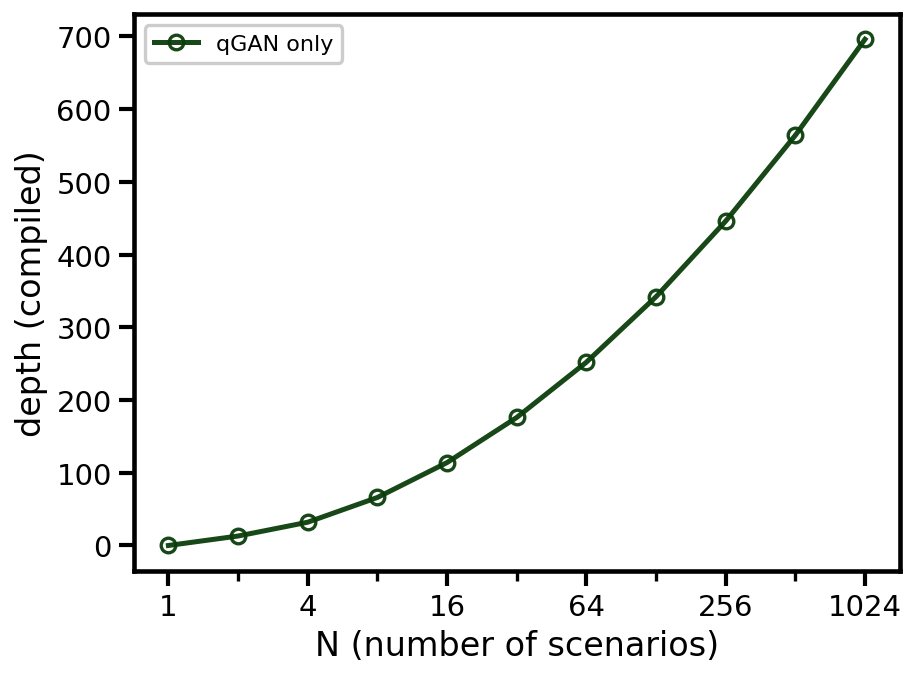}
      \put(2,80){\Large\bfseries (e)}
    \end{overpic}
    \subcaption{}\label{fig:qgan_only_depth}
  \end{subfigure}\hfill
  \begin{subfigure}[t]{.245\linewidth}
    \centering
    \begin{overpic}[width=\linewidth,percent]{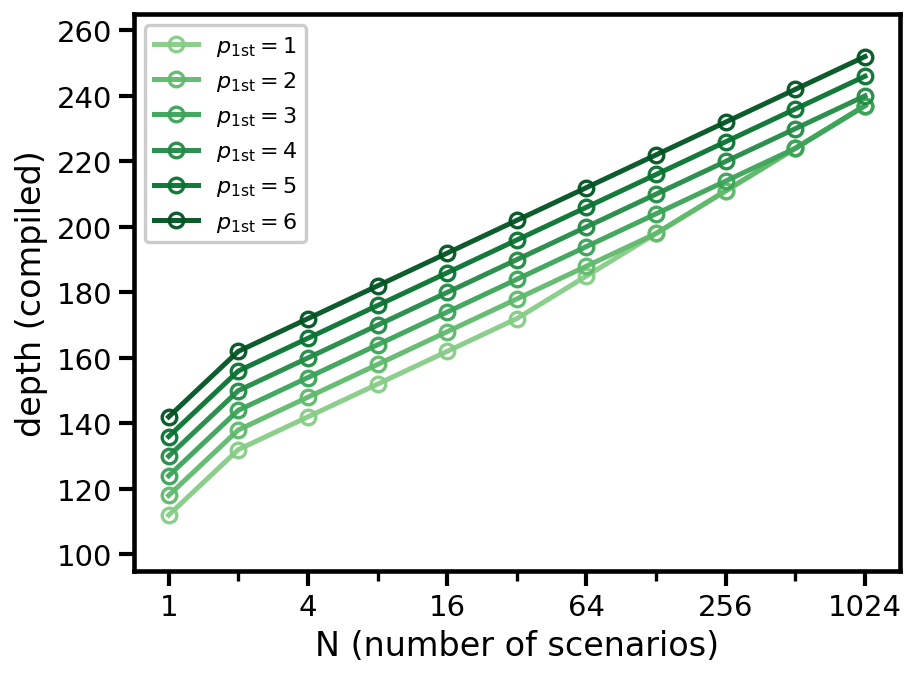}
      \put(2,80){\Large\bfseries (f)}
    \end{overpic}
    \subcaption{}\label{fig:no_qgan_p1_depth}
  \end{subfigure}\hfill
  \begin{subfigure}[t]{.245\linewidth}
    \centering
    \begin{overpic}[width=\linewidth,percent]{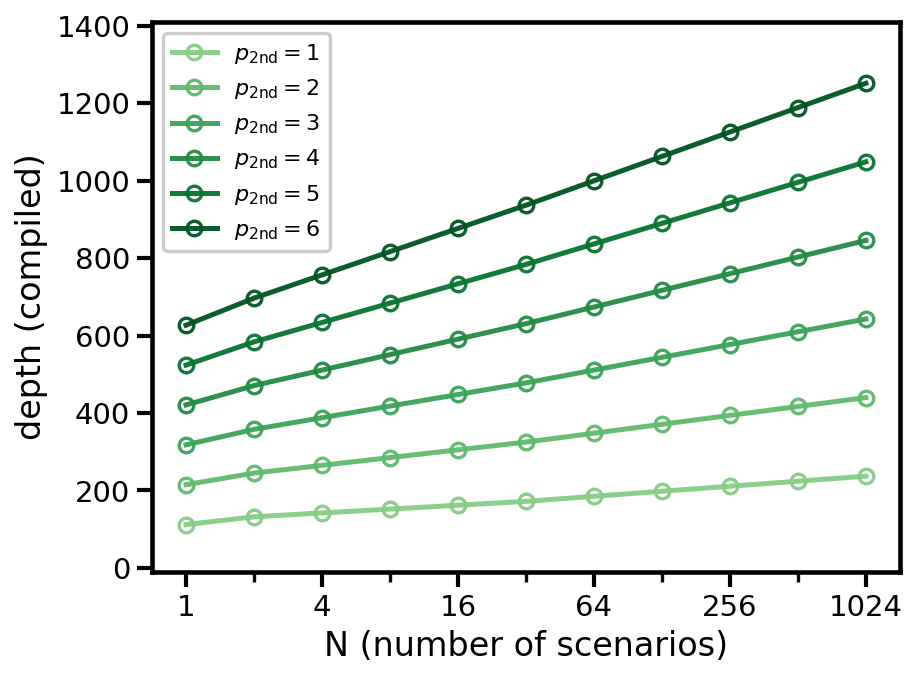}
      \put(2,80){\Large\bfseries (g)}
    \end{overpic}
    \subcaption{}\label{fig:no_qgan_p2_depth}
  \end{subfigure}\hfill
  \begin{subfigure}[t]{.245\linewidth}
    \centering
    \begin{overpic}[width=\linewidth,percent]{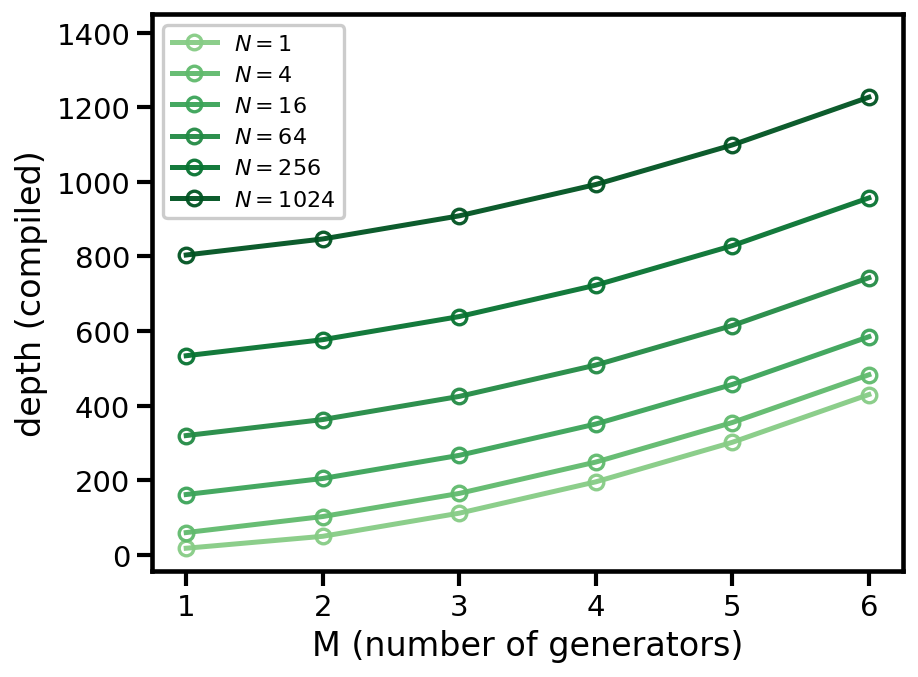}
      \put(2,80){\Large\bfseries (h)}
    \end{overpic}
    \subcaption{}\label{fig:m_scale_depth}
  \end{subfigure}
    \caption{
Scaling of the compiled gate count and circuit depth with respect to the uncertainty discretization, i.e., the scenario count $N$, and the number of generators $M$.
Panels (a)--(d) show the compiled gate count, and panels (e)--(h) show the compiled depth.
Panels (a) and (e) evaluate the size of qGAN (the scenario generator) alone as a function of $N$.
Panels (b) and (f) show the $N$-dependence of the circuit excluding the qGAN contribution in the proposed method, when varying $p_{\mathrm{1st}}$.
Panels (c) and (g) show, likewise, the $N$-dependence of the circuit excluding the qGAN contribution, when varying $p_{\mathrm{2nd}}$.
Panels (d) and (h) compare the circuit size of the entire proposed method including qGAN ($p_{\mathrm{1st}}=p_{\mathrm{2nd}}=1$) across multiple values of $N$, when varying $M$.
    }
\label{fig:scaling_stochastic_UCP}
  \label{fig:qgan_qaoa_panels}
\end{figure*}

In this subsection, we evaluate the scaling of the quantum-circuit size of our method (qGAN-QAOA) with respect to the uncertainty discretization, i.e., the scenario count $N$, and the number of generators $M$.
As evaluation metrics, we use (i) the compiled gate count and (ii) the compiled depth.
Here, $N$ is related to the number of qubits in the scenario register $n_{\xi}$ by $N=2^{n_{\xi}}$, and increasing $N$ corresponds to improving the representation accuracy of the uncertainty distribution.
We denote the QAOA repetition numbers by $p_{\mathrm{1st}}$ and $p_{\mathrm{2nd}}$, and for qGAN we use a TwoLocal circuit with $p_{\mathrm{qGAN}}=n_{\xi}$.
The compiled gate count and compiled depth are evaluated using Qiskit's \texttt{transpile}.
We use \texttt{AerSimulator} as the backend, set the basis-gate set to
\(\{\mathrm{rz},\mathrm{sx},\mathrm{x},\mathrm{cx}\}\), and set the optimization level to \texttt{optimization\_level}=0.
The gate count is computed by excluding measurements and barriers from the transpiled circuit, and the circuit depth is computed using \texttt{depth()} of the transpiled circuit.

As shown in Fig.~\ref{fig:scaling_stochastic_UCP}(\subref{fig:qgan_only_gate}) and (\subref{fig:qgan_only_depth}), the circuit size of qGAN alone increases monotonically as $N$ increases.
This is because, under the setting $p_{\mathrm{qGAN}}=n_{\xi}$, the circuit size grows as a polynomial in $n_{\xi}$, i.e., it increases polylogarithmically with respect to $N$.
Therefore, with this choice of the qGAN generator circuit, the quantum-circuit size of the proposed method does not violate the $O(\mathrm{poly}\log N)$ scaling.

Next, we show the scaling when excluding the qGAN contribution from the proposed method in Fig.~\ref{fig:scaling_stochastic_UCP}(\subref{fig:no_qgan_p1_gate}) and (\subref{fig:no_qgan_p1_depth}), as well as Fig.~\ref{fig:scaling_stochastic_UCP}(\subref{fig:no_qgan_p2_gate}) and (\subref{fig:no_qgan_p2_depth}).
Figures~\ref{fig:scaling_stochastic_UCP}(\subref{fig:no_qgan_p1_gate}) and \ref{fig:scaling_stochastic_UCP}(\subref{fig:no_qgan_p1_depth}) correspond to varying $p_{\mathrm{1st}}$, where both the gate count and the depth increase approximately proportionally to $p_{\mathrm{1st}}$, and the increment is independent of the scenario count $N$.
This is because, to satisfy the non-anticipativity constraints Req.~\ref{req:Quantum_non-anticipativity}, the first-stage optimization circuit associated with $p_{\mathrm{1st}}$ contains no gates that interact with the scenario register.

In contrast, the dependence on $p_{\mathrm{2nd}}$ in Fig.~\ref{fig:scaling_stochastic_UCP}(\subref{fig:no_qgan_p2_gate}) and \ref{fig:scaling_stochastic_UCP}(\subref{fig:no_qgan_p2_depth}) is more pronounced than in the case of $p_{\mathrm{1st}}$, and the increase in the gate count and depth becomes larger as the scenario count $N$ increases.
This is because the second-stage optimization circuit introduces quantum mapping gates that interact with the scenario register.
However, even in this case, the circuit size is at most $O(\mathrm{poly}\log N)$.

Finally, for the entire proposed method including qGAN ($p_{\mathrm{1st}}=p_{\mathrm{2nd}}=1$), we show the scaling with respect to $M$ in Fig.~\ref{fig:scaling_stochastic_UCP}(\subref{fig:m_scale_gate}) and Fig.~\ref{fig:scaling_stochastic_UCP}(\subref{fig:m_scale_depth}).
Both the gate count and the depth increase approximately polynomially as $M$ increases, reflecting the additive contribution of the gate operations required for each generator.
Moreover, even for the same $M$, the circuit size increases as $N$ becomes larger, and for large $N$ in particular, the qGAN circuit component raises the baseline circuit size.
Overall, although the circuit size of our method increases with $N$ and $M$, we have $n_{\xi}=\log N$ qubits for qGAN, $n_{1\mathrm{st}}=M$ qubits for the first stage, and $n_{2\mathrm{nd}}=M$ qubits for the second stage.
Therefore, with the total number of qubits $n=n_{\xi}+n_{1\mathrm{st}}+n_{2\mathrm{nd}}$, the circuit size is at most $O(\mathrm{poly}\,n)$.
For scaling to larger problems, reducing the qGAN circuit size is important.

\subsection{Evaluation of computational complexity}
\label{subsec:computational_complexity}

In this section, we consider a setting where observational data are given for an uncertainty $\xi$ that follows an unknown population distribution, and evaluate the computational complexity---including distribution learning, expectation estimation, and two-stage stochastic programming---from the viewpoint of achieving accuracy $O(\varepsilon)$.
Here, we do not consider quantum computational errors arising from hardware noise, but focus on the query complexity required for expectation estimation and the scaling of the circuit size for a single expectation evaluation.
Moreover, since the distribution-learning error constitutes a separate axis that depends on the data size, we assume in this section that the model error $\varepsilon_{\mathrm{model}}$ is sufficiently small, and compare classical methods with the proposed method in the regime where the discretization error and the expectation-estimation error are dominant.

\subsubsection{Error decomposition}
For the first-stage objective $f(\boldsymbol{x})$ and the second-stage recourse term $Q(\boldsymbol{x},\xi)$, define the true objective value as
\[
J(\boldsymbol{x}) := f(\boldsymbol{x}) + \mathbb{E}_{\xi}[Q(\boldsymbol{x},\xi)].
\]
The difference between $J(\boldsymbol{x})$ and the estimate $\widehat J(\boldsymbol{x})$ returned by the algorithm can be upper-bounded, excluding quantum computational errors, as
\begin{equation}
\label{eq:error_budget_additive}
  |J(\boldsymbol{x})-\widehat J(\boldsymbol{x})|
  \le
  \varepsilon_{\mathrm{model}}
  + \varepsilon_{\mathrm{disc}}
  + \varepsilon_{\mathrm{est}}
  + \varepsilon_{\mathrm{opt}}
\end{equation}
Here, $\varepsilon_{\mathrm{disc}}$ denotes the discretization error due to a finite representation, $\varepsilon_{\mathrm{est}}$ denotes the statistical error of expectation estimation, and $\varepsilon_{\mathrm{opt}}$ denotes the error arising from the stopping accuracy of the variational-parameter updates by a classical optimizer.
In what follows, we evaluate the computational complexity required to achieve, in particular, $\varepsilon_{\mathrm{disc}}=O(\varepsilon)$ and $\varepsilon_{\mathrm{est}}=O(\varepsilon)$.

% -------------------------------------------------
\subsubsection{Classical baseline: scenario Monte Carlo}
\label{subsubsec:classical_eps}

In practical classical two-stage stochastic programming, as expressed in Eq.~\eqref{eq:two_stage_sp_discrete}, scenarios $\xi_s$ are sampled from a generative model such as a GAN, and SAA approximates the expectation by the sample average.
Let the number of samples be $N_{\mathrm{MC}}$.
Based on the variance of the sample mean, we have
\begin{equation}
  \varepsilon_{\mathrm{est}} = O(N_{\mathrm{MC}}^{-1/2})
  \Rightarrow
  N_{\mathrm{MC}} = O(\varepsilon^{-2})
\end{equation}
In two-stage stochastic programming based on the L-shaped method, the second-stage subproblem is solved for each sampled scenario, and the master problem is updated using the obtained information.
Let $T_{\mathrm{2nd}}^{\mathrm{cl}}$ be the computational cost of solving one second-stage subproblem.
Then, the computational cost of evaluating the expected recourse cost per iteration of the L-shaped method scales as
\begin{equation}
\label{eq:classical_cost_per_iter}
  \mathrm{Cost}^{\mathrm{cl}}_{\mathrm{eval}}
  =
  O(N_{\mathrm{MC}}\,T_{\mathrm{2nd}}^{\mathrm{cl}})
  =
  O(\varepsilon^{-2} T_{\mathrm{2nd}}^{\mathrm{cl}})
\end{equation}
Let $K^{\mathrm{cl}}(\varepsilon)$ denote the number of iterations of the L-shaped method required to achieve $\varepsilon_{\mathrm{opt}}=O(\varepsilon)$.
Then, the total computational cost is
\begin{equation}
\label{eq:classical_total_cost}
  \mathrm{Cost}^{\mathrm{cl}}_{\mathrm{total}}
  =
  K^{\mathrm{cl}}(\varepsilon)\cdot O(\varepsilon^{-2} T_{\mathrm{2nd}}^{\mathrm{cl}})
\end{equation}

% -------------------------------------------------
\subsubsection{qGAN-QAOA approach}
\label{subsubsec:proposed_eps}

In the proposed method, to handle the continuous uncertainty $\xi$ on a finite-dimensional scenario register, we use $N=2^{n_\xi}$ realizations obtained by uniform discretization.
Under the discretization based on Eq.~\eqref{eq:disc_rv}, we have
\begin{equation}
  \varepsilon_{\mathrm{disc}}=O(1/N)
  \Rightarrow
  N=O(\varepsilon^{-1})
\end{equation}
In this case, by Eq.~\eqref{eq:scaling_qGAN-QAOA}, the proposed quantum circuit $U_{\mathrm{qGAN\text{-}QAOA}}$ can be implemented, for $N$ scenarios, with a gate count
\begin{equation}
  C(N)=O(\mathrm{poly}(\log N))
\end{equation}
Here, $C(N)$ denotes the circuit size per shot, and factors depending on the number of terms on the decision-variable side and the QAOA depths $p_{1\mathrm{st}},p_{2\mathrm{nd}}$ are included in $C(N)$.

To perform expectation estimation by measurement sampling and achieve $\varepsilon_{\mathrm{est}}=O(\varepsilon)$, the number of shots $S=O(\varepsilon^{-2})$ is required.
Therefore, the objective-evaluation cost per variational-parameter update is
\begin{equation}\label{eq:prop_noqae_cost}
\begin{aligned}
  \mathrm{Cost}^{\mathrm{prop(no\,QAE)}}_{\mathrm{eval}}
  &=
  O(\varepsilon^{-2})\cdot C(N)\\
  &=
  O(\varepsilon^{-2})\cdot O(\mathrm{poly}(\log N))
\end{aligned}
\end{equation}
Substituting $N=O(\varepsilon^{-1})$ yields
\begin{equation}
\label{eq:prop_noqae_cost_eps}
  \mathrm{Cost}^{\mathrm{prop(no\,QAE)}}_{\mathrm{eval}}
  =
  O\bigl(\varepsilon^{-2}\mathrm{poly}(\log(\varepsilon^{-1}))\bigr)
\end{equation}

On the other hand, as also proposed by Rotello et al.~\cite{rotello2024_expected_value}, to achieve the estimation error $\varepsilon_{\mathrm{est}}=O(\varepsilon)$ using QAE (quantum amplitude estimation)~\cite{brassard2002qae}, the number of queries to the Grover-type amplification operator $Q$ repeated internally by QAE scales as $O(\varepsilon^{-1})$.
Here, $Q$ is defined using the state-preparation circuit $\mathcal{A}$ and the phase-flip gates $S_0,S_\chi$ as $Q:=\mathcal{A}S_0\mathcal{A}^\dagger S_\chi$, where $\mathcal{A}$ corresponds to the proposed circuit $U_{\mathrm{qGAN\text{-}QAOA}}$ together with auxiliary circuits for evaluation.
Since the estimation is probabilistic, to suppress the failure probability to at most $\delta$, an additional factor of $O(\log(\delta^{-1}))$ repetitions is required via independent repetitions and amplification by the median (or majority vote).
Moreover, implementing $Q$ requires the controlled circuit $\mathrm{c}\text{-}U_{\mathrm{qGAN\text{-}QAOA}}$ and the adjoint circuit $U_{\mathrm{qGAN\text{-}QAOA}}^{\dagger}$ of $U_{\mathrm{qGAN\text{-}QAOA}}$; we aggregate the resulting increase in gate count as well as the $O(\log(\delta^{-1}))$ repetitions into an overhead factor $C_{\mathrm{AE}}$.
Therefore, the objective-evaluation cost per variational-parameter update is
\begin{equation}
\label{eq:prop_qae_cost}
\begin{aligned}
  \mathrm{Cost}^{\mathrm{prop(QAE)}}_{\mathrm{eval}}
  &=
  O(\varepsilon^{-1})\cdot C_{\mathrm{AE}}\cdot C(N)\\
  &=
  O(\varepsilon^{-1})\cdot C_{\mathrm{AE}}\cdot O(\mathrm{poly}(\log N))
\end{aligned}
\end{equation}
Substituting $N=O(\varepsilon^{-1})$ yields
\begin{equation}
\label{eq:prop_qae_cost_eps}
  \mathrm{Cost}^{\mathrm{prop(QAE)}}_{\mathrm{eval}}
  =
  O\bigl(\varepsilon^{-1} \cdot C_{\mathrm{AE}} \cdot \mathrm{poly}(\log \varepsilon^{-1})\bigr)
\end{equation}
Let $K^{\mathrm{prop}}(\varepsilon)$ denote the number of variational-parameter updates required to achieve $\varepsilon_{\mathrm{opt}}=O(\varepsilon)$.
Then, the total computational cost is given by multiplying each evaluation cost by $K^{\mathrm{prop}}(\varepsilon)$.

% -------------------------------------------------
\subsubsection{Comparison of classical methods and qGAN-QAOA}
\label{subsubsec:eps_asymptotic_compare}

\begin{table}[t]
\centering
\small
\renewcommand{\arraystretch}{1.35}
\caption{Computational complexity per iteration for estimating the objective value with accuracy $O(\varepsilon)$.}
\label{tab:complexity_comparison_eq}
\begin{tabular}{l @{\hspace{0.5cm}} l}
\hline
\hline
Method & Complexity per iteration \\
\hline
Classical SP (SAA) &
$O(\varepsilon^{-2}\cdot T_{\mathrm{2nd}}^{\mathrm{cl}})$ \\
qGAN-QAOA (shot-based) &
$O\bigl(\varepsilon^{-2}\cdot \mathrm{poly}(\log \varepsilon^{-1})\bigr)$ \\
qGAN-QAOA (with QAE) &
$O\bigl(\varepsilon^{-1} \cdot C_{\mathrm{AE}} \cdot \mathrm{poly}(\log \varepsilon^{-1}\bigr)$ \\
\hline
\hline
\end{tabular}
\end{table}

Here, we do not take into account the difference between the iteration counts $K^{\mathrm{cl}}(\varepsilon)$ and $K^{\mathrm{prop}}(\varepsilon)$ required under $\varepsilon_{\mathrm{est}}=O(\varepsilon)$, and instead focus the comparison on the differences in the $\varepsilon$-dependence and the scenario-count dependence of the objective-evaluation cost per iteration.
In this setting, the leading $\varepsilon$-dependence of the objective-evaluation cost per iteration is summarized in Table~\ref{tab:complexity_comparison_eq}.

The classical method requires $O(\varepsilon^{-2})$ samples for expectation estimation.
Since the proposed method can compress the scenario dependence of the quantum circuit to $\mathrm{poly}(\log N)$, even under the discretization accuracy $N=O(\varepsilon^{-1})$, the additional dependence on $N$ is suppressed to a logarithmic factor.
Moreover, when combined with QAE, the query complexity for expectation estimation improves from $O(\varepsilon^{-2})$ to $O(\varepsilon^{-1})$, suggesting that the proposed method can become advantageous when $\varepsilon$ is very small.

Overall, the advantages offered by the proposed method can be summarized as follows:
it can treat the expected recourse as a circuit expectation value while retaining uncertainty as a distribution, and it can compress the $N$-dependence on the scenario-register side to a logarithmic scale by exploiting the structure of uniform discretization.
On the other hand, whether this advantage materializes depends on several assumptions, including:
(1) the problem structure is such that the expectation-evaluation (scenario-enumeration) component induced by increasing the scenario count $N$ is dominant in the total computational cost;
(2) the product of the required number of shots for expectation estimation and the number of objective-function evaluations (iterations) needed for variational optimization remains within available computational resources; and
(3) the qGAN training error is sufficiently small so that it does not become dominant relative to other error sources such as discretization and measurement errors.
Therefore, for practical scaling to larger problems, the key is to design a scenario-generator circuit that achieves the required distribution-representation accuracy without increasing circuit depth, and to stabilize expectation estimation and variational optimization under noise.
In the next section, we summarize our conclusions and discuss future directions.

\section{Conclusion}\label{sec:conclusion} 

In this study, we propose qGAN-QAOA, a unified quantum-circuit workflow: a pre-trained quantum generative adversarial network (qGAN) encodes the scenario distribution, and QAOA optimizes first-stage decisions by minimizing the two-stage objective, including the expected recourse cost, on a single variational quantum circuit.
The proposed method encodes a discretized scenario distribution into the amplitudes of the scenario register using a pre-trained qGAN, and, with the generator parameters $\boldsymbol{\theta}^\ast$ fixed, performs circuit-expectation minimization of the two-stage objective by optimizing the variational parameters of QAOA.
Moreover, based on the circuit structure, we showed that the marginal distribution of the first-stage measurement outcomes is independent of the scenario, and clarified that the property corresponding to the non-anticipativity constraints in two-stage stochastic programming holds in terms of measurement outcome statistics.

Furthermore, under a setting where a continuous uncertainty is uniformly discretized, we showed that the gate count and circuit depth of the proposed quantum circuit can be bounded by $O(\mathrm{poly}(\log N))$ with respect to the scenario count $N$ by exploiting that the random-variable operator $\hat{\xi}$ admits a sparse Pauli-$Z$ expansion via the Walsh--Hadamard transform.
This result theoretically positions the proposed method as having room for improvement in its dependence on the scenario count, particularly in regimes where scenario enumeration in SAA becomes a bottleneck.
As a case study, we considered a two-stage UCP with PV output uncertainty and confirmed that qGAN can learn the distribution with high agreement, that qGAN-QAOA concentrates the first-stage solutions into a small set of promising candidates, and that the expected cost of a representative solution tends to be closer to the stochastic-programming baseline (RP) than to the expected-value baseline (EEV).
Overall, the proposed method is promising as a quantum-circuit implementation of two-stage decision-making that treats uncertainty as a distribution.

For future work, improving the efficiency of the scenario-generator circuit and enhancing the distribution-approximation accuracy are important.
Beyond TwoLocal, designing ans\"atze that reflect structural properties of the target distribution---such as skewness, tail heaviness, and multimodality---may further reduce the required circuit depth.
Next, validation on real quantum hardware is essential.
Since this paper is primarily based on simulator experiments, it is necessary to quantify performance degradation under noise, in particular the impact of error accumulation in multi-qubit, multi-layer QAOA, and to clarify practical prospects in combination with circuit optimization and error mitigation.
Moreover, extending the approach to stochastic-programming problems beyond stochastic UCP is also important.
For problems that can be formulated as two-stage stochastic programming, such as supply-chain design and portfolio optimization, we plan to investigate the effectiveness and limitations of this integrated framework: a pre-trained qGAN encodes the scenario distribution, and QAOA optimizes first-stage decisions by minimizing the two-stage objective, including the expected recourse cost.
In addition, to address more general decision-making processes, extending the proposed framework to multi-stage stochastic programming remains an important direction for future research.

\section*{Acknowledgements}
This work was partially supported by the New Energy and Industrial Technology Development Organization (NEDO) under grant number JPNP23003. 
We are grateful to Anthony Wilkie for insightful discussions and feedback during the INFORMS Annual Meeting 2025.

% \section{Appendix}
% \label{sec:Appendix}

% -- 参考文献（REVTeX/APS） --
%\clearpage
\bibliographystyle{apsrev4-2}  % ← ここを ieicetr から変更
\bibliography{ref.bib} 

\clearpage
\appendix
\section{Proof of Proposition~\ref{prop:quantum_two_stage_objective}}
\label{app:proof_quantum_two_stage_objective}

Since the total Hamiltonian is given by
\begin{equation}
  H_P
  =
  H_P^{1\mathrm{st}}(\hat{\boldsymbol{x}}^{1\mathrm{st}})
  +
  H_P^{2\mathrm{nd}}(\hat{\boldsymbol{y}}^{2\mathrm{nd}},\hat{\boldsymbol{x}}^{1\mathrm{st}},\hat{\xi})
  \label{eq:app_HP_split}
\end{equation}
the expectation value $\braket{\Psi | H_P | \Psi}$ can be decomposed as
\begin{equation}
  \braket{\Psi | H_P | \Psi}
  =
  \braket{\Psi | H_P^{1\mathrm{st}} | \Psi}
  +
  \braket{\Psi | H_P^{2\mathrm{nd}} | \Psi}.
  \label{eq:app_HP_expect_split}
\end{equation}
Because $H_P^{1\mathrm{st}}$ acts only on the first-stage register and
$  \hat{\boldsymbol{x}}^{1\mathrm{st}}
  \ket{b_k^{1\mathrm{st}}}
  = \boldsymbol{x}_k^{1\mathrm{st}}\ket{b_k^{1\mathrm{st}}}$ holds, $\ket{b_k^{1\mathrm{st}}}$ is an eigenstate and thus
\begin{equation}
  H_P^{1\mathrm{st}}(\hat{\boldsymbol{x}}^{1\mathrm{st}})
  \ket{b_k^{1\mathrm{st}}}
  =
  H_P^{1\mathrm{st}}(\boldsymbol{x}_k^{1\mathrm{st}})
  \ket{b_k^{1\mathrm{st}}}
  \label{eq:app_HP_1st}
\end{equation}
follows.
Using the qGAN-QAOA ansatz $|\Psi\rangle$ in Eq.~\eqref{eq:psi_final_expanded}, we obtain
\begin{equation}
\begin{aligned}
  \braket{\Psi | H_P^{1\mathrm{st}} | \Psi}
  &=
  \sum_{k,k'}
  \sum_{s,s'}
    \alpha_k \alpha_{k'}^\ast
    \sqrt{p_s p_{s'}}
  \nonumber \\
  &
    \bigl(
      \bra{\psi_{s',k'}^{2\mathrm{nd}}}
      \otimes \bra{b_{k'}^{1\mathrm{st}}}
      \otimes \bra{b_{s'}^\xi}
    \bigr)
    H_P^{1\mathrm{st}}\\
  &\hspace{5em}\times
    \bigl(
      \ket{\psi_{s,k}^{2\mathrm{nd}}}
      \otimes \ket{b_k^{1\mathrm{st}}}
      \otimes \ket{b_s^\xi}
    \bigr).
\end{aligned}
\end{equation}
Since $H_P^{1\mathrm{st}}$ acts only on the first-stage register and
\(
  \braket{b_{k'}^{1\mathrm{st}} | b_k^{1\mathrm{st}}}
  = \delta_{k'k},
  \;
  \braket{b_{s'}^\xi | b_s^\xi}
  = \delta_{s's}
\),
we have
\begin{equation}
\begin{aligned}
  \braket{\Psi | H_P^{1\mathrm{st}} | \Psi}
  &=
  \sum_{k}
  \sum_{s}
    |\alpha_k|^2 p_s\,
    \braket{\psi_{s,k}^{2\mathrm{nd}} | \psi_{s,k}^{2\mathrm{nd}}}
    H_P^{1\mathrm{st}}(\boldsymbol{x}_k^{1\mathrm{st}}) \\
  &=
  \sum_{k}
    |\alpha_k|^2
    H_P^{1\mathrm{st}}(\boldsymbol{x}_k^{1\mathrm{st}})
\end{aligned}
\label{eq:app_HP1_expect}
\end{equation}
where we used
$\langle\psi_{s,k}^{2\mathrm{nd}}|\psi_{s,k}^{2\mathrm{nd}}\rangle = 1
  ,\,\sum_{s=0}^{N-1} p_s = 1$
and Eq.~\eqref{eq:app_HP_1st}.
Similarly, for the second-stage term we can write
\begin{equation}
\begin{aligned}
  \braket{\Psi | H_P^{2\mathrm{nd}} | \Psi}
  &=
  \sum_{k,k'}
  \sum_{s,s'}
    \alpha_k \alpha_{k'}^\ast
    \sqrt{p_s p_{s'}}
  \\
  &\times
    \bigl(
      \bra{\psi_{s',k'}^{2\mathrm{nd}}}
      \otimes \bra{b_{k'}^{1\mathrm{st}}}
      \otimes \bra{b_{s'}^\xi}
    \bigr) 
  \\
  &\times
    H_P^{2\mathrm{nd}}(\hat{\boldsymbol{y}}^{2\mathrm{nd}},\hat{\boldsymbol{x}}^{1\mathrm{st}},\hat{\xi})
    \bigl(
      \ket{\psi_{s,k}^{2\mathrm{nd}}}
      \otimes \ket{b_k^{1\mathrm{st}}}
      \otimes \ket{b_s^\xi}
    \bigr)
\end{aligned}
\end{equation}
Although $H_P^{2\mathrm{nd}}$ acts on all three registers, the orthogonality of the basis states again leaves only the terms with $k'=k$ and $s'=s$, yielding
\begin{equation}
\begin{aligned}
  \braket{\Psi | H_P^{2\mathrm{nd}} | \Psi}
  &=
  \sum_{k}
  \sum_{s}
    |\alpha_k|^2 p_s\,
  \nonumber \\
  &
    \bra{\psi_{s,k}^{2\mathrm{nd}}}
      H_P^{2\mathrm{nd}}(\hat{\boldsymbol{y}}^{2\mathrm{nd}},\boldsymbol{x}_k^{1\mathrm{st}},\xi_s)
    \ket{\psi_{s,k}^{2\mathrm{nd}}}.
\end{aligned}
\label{eq:app_HP2_expect_raw}
\end{equation}
Defining
\begin{equation}
  Q(\boldsymbol{x}_k^{1\mathrm{st}},\xi_s)
  :=
  \bra{\psi_{s,k}^{2\mathrm{nd}}}
    H_P^{2\mathrm{nd}}(\hat{\boldsymbol{y}}^{2\mathrm{nd}},\boldsymbol{x}_k^{1\mathrm{st}},\xi_s)
  \ket{\psi_{s,k}^{2\mathrm{nd}}}
  \label{eq:app_Q_def}
\end{equation}
we can rewrite Eq.~\eqref{eq:app_HP2_expect_raw} as
\begin{equation}
\begin{aligned}
  \braket{\Psi | H_P^{2\mathrm{nd}} | \Psi}
  &=
  \sum_{k}
    |\alpha_k|^2
    \sum_{s}
      p_s\,
      Q(\boldsymbol{x}_k^{1\mathrm{st}},\xi_s) \\
  &=
  \sum_{k}
    |\alpha_k|^2\,
    \mathbb{E}_\xi\bigl[Q(\boldsymbol{x}_k^{1\mathrm{st}},\xi)\bigr]
\end{aligned}
\label{eq:app_HP2_expect}
\end{equation}
Substituting Eqs.~\eqref{eq:app_HP1_expect} and~\eqref{eq:app_HP2_expect} into Eq.~\eqref{eq:app_HP_expect_split}, we obtain
\begin{equation}
\begin{aligned}
  \braket{\Psi | H_P | \Psi}
  &=
  \sum_{k}
    |\alpha_k|^2
    H_P^{1\mathrm{st}}(\boldsymbol{x}_k^{1\mathrm{st}})
  +
  \sum_{k}
    |\alpha_k|^2\,
    \mathbb{E}_\xi\bigl[Q(\boldsymbol{x}_k^{1\mathrm{st}},\xi)\bigr] \\
  &=
  \sum_{k}
    |\alpha_k|^2
    \left[
      H_P^{1\mathrm{st}}(\boldsymbol{x}_k^{1\mathrm{st}})
      +
      \mathbb{E}_\xi\bigl[Q(\boldsymbol{x}_k^{1\mathrm{st}},\xi)\bigr]
    \right]
\end{aligned}
\end{equation}
which yields Eq.~\eqref{eq:quantum_two_stage_obj}.

\section{Proof of Proposition~\ref{prop:arithmetic_WHT}}
\label{app:proof_arithmetic_WHT}
\noindent\textbf{Proof.}
Here, we provide a detailed proof of Proposition~\ref{prop:arithmetic_WHT}.
From Eq.~\eqref{eq:arithmetic_seq}, the scenario realizations can be written as
  \begin{equation*}
    \xi_s
    = \xi_{\min} + \Delta\xi\, s,
    \quad s=0,\dots,N-1
  \end{equation*}
By the definition of the Walsh--Hadamard transform in Eq.~\eqref{eq:WHT_coeff_general}, the coefficient $c_j$ is given by
  \begin{equation*}
    c_j
    = \frac{1}{N}
      \sum_{s=0}^{N-1}
        (-1)^{j\cdot s}\,\xi_s,
    \quad j=0,\dots,N-1
  \end{equation*}
Using the linearity of $\xi_s$, we obtain
  \begin{equation*}
  \begin{aligned}
    c_j
    &= \frac{\xi_{\min}}{N}
        \sum_{s=0}^{N-1} (-1)^{j\cdot s}
      + \frac{\Delta\xi}{N}
        \sum_{s=0}^{N-1} s\,(-1)^{j\cdot s}\\
    &=: \frac{\xi_{\min}}{N} S^{(0)}_j
      + \frac{\Delta\xi}{N} S^{(1)}_j      
  \end{aligned}
  \end{equation*}
where we define
  \begin{align*}
    S^{(0)}_j &:= \sum_{s=0}^{N-1} (-1)^{j\cdot s},\\
    S^{(1)}_j &:= \sum_{s=0}^{N-1} s\,(-1)^{j\cdot s}.
  \end{align*}

First, for $S^{(0)}_j$, the orthogonality of the row vectors of the Walsh--Hadamard matrix $H_{n_\xi}$ implies
  \begin{equation*}
    S^{(0)}_j
    = \sum_{s=0}^{N-1} (-1)^{j\cdot s}
    =
    \begin{cases}
      N, & j=0,\\
      0, & j\neq 0,
    \end{cases}
  \end{equation*}
and therefore the term arising from $\xi_{\min}$ contributes only when $j=0$, yielding
  \begin{equation*}
  \begin{aligned}
      \text{for } j=0:\quad
    c_0^{(\mathrm{const})}
    &= \frac{\xi_{\min}}{N} S^{(0)}_0
    = \xi_{\min},\\
    \text{for } j\neq 0:\quad
    c_j^{(\mathrm{const})}
    &= 0.
  \end{aligned}
  \end{equation*}

Next, we evaluate $S^{(1)}_j$.
Writing the binary representation of $s$ as
  \begin{equation*}
    s = \sum_{i=0}^{n_\xi-1} 2^i b_{s,i},
    \qquad b_{s,i}\in\{0,1\},
  \end{equation*}
we have
  \begin{equation*}
  \begin{aligned}
    S^{(1)}_j
    &= \sum_{s=0}^{N-1}
        \left(
          \sum_{i=0}^{n_\xi-1} 2^i b_{s,i}
        \right)
        (-1)^{j\cdot s}\\
    &=
    \sum_{i=0}^{n_\xi-1} 2^i
      \sum_{s=0}^{N-1}
        b_{s,i} (-1)^{j\cdot s}\\
    &=: \sum_{i=0}^{n_\xi-1} 2^i S^{(1)}_{j,i}.      
  \end{aligned}
  \end{equation*}
To evaluate the inner sum
  \(
    S^{(1)}_{j,i}
    := \sum_{s=0}^{N-1}
         b_{s,i} (-1)^{j\cdot s},
  \)
we count $s$ by separating the $i$-th bit from the remaining bits.
That is, we regard $s$ as
  \(
    s = (t,b_{s,i}),
  \)
where $t$ ranges over all assignments of the other $n_\xi-1$ bits.
In this case, there exists a function $\phi_j(t)$ such that
  \[
    j\cdot s
    = j_i b_{s,i}
      + \sum_{k\neq i} j_k b_{s,k}
    = j_i b_{s,i} + \phi_j(t).
  \]
Therefore,
  \begin{align*}
    S^{(1)}_{j,i}
    &= \sum_{t}
       \Bigl[
         \text{the term with } b_{s,i}=0
         + \text{the term with } b_{s,i}=1
       \Bigr] \\
    &= \sum_{t}
       \Bigl[
         0 \cdot (-1)^{\phi_j(t)}
         + 1 \cdot (-1)^{j_i + \phi_j(t)}
       \Bigr] \\
    &= (-1)^{j_i}
       \sum_{t} (-1)^{\phi_j(t)}.
  \end{align*}

If $j_i=0$, then
  \(
    S^{(1)}_{j,i}
    = \sum_t (-1)^{\phi_j(t)}.
  \)
Since $t$ ranges over all $n_\xi-1$ bits, this is analogous to a Walsh--Hadamard sum over $(n_\xi-1)$ bits and becomes zero unless all bits of $j$ except the $i$-th are zero (i.e., $j=0$).
On the other hand, when $j_i=1$ and all other bits $j_k$ ($k\neq i$) are zero, i.e., $j=2^i$, we have
  \(
    \phi_j(t)\equiv 0,
  \)
and thus
  \[
    S^{(1)}_{j,i}
    = (-1)^{1} \sum_t 1
    = -\,\frac{N}{2}.
  \]
Summarizing these cases, we obtain
  \[
    S^{(1)}_{j,i}
    =
    \begin{cases}
      -\,\dfrac{N}{2}, & j = 2^i,\\[0.5ex]
      0, & j\neq 2^i,
    \end{cases}
  \]
and hence
  \begin{equation*}
  \begin{aligned}
      S^{(1)}_j
    &= \sum_{i=0}^{n_\xi-1} 2^i S^{(1)}_{j,i}\\
    &=
    \begin{cases}
      \displaystyle
      \sum_{s=0}^{N-1} s
      = \dfrac{N(N-1)}{2} & j=0\\
      \displaystyle
      -\,2^{i}\,\dfrac{N}{2}
      = -\,N\,2^{i-1} & j=2^i\ (i=0,\dots,n_\xi-1)\\
      0 & w(j)\ge 2
    \end{cases}
  \end{aligned}
  \end{equation*}
Finally, substituting the above evaluations of $S^{(0)}_j$ and $S^{(1)}_j$ into
  \begin{equation*}
      c_j
    = \frac{\xi_{\min}}{N} S^{(0)}_j
      + \frac{\Delta\xi}{N} S^{(1)}_j,
  \end{equation*}
and noting that $N=2^{n_\xi}$, we obtain:

    \noindent\textbullet\ for $j=0$,
    \begin{align*}
      c_0
      &= \frac{\xi_{\min}}{N} N
         + \frac{\Delta\xi}{N}\,\frac{N(N-1)}{2} \\[0.3ex]
      &= \xi_{\min} + \Delta\xi\,\frac{N-1}{2} \\[0.3ex]
      &= \xi_{\min}
         + \Delta\xi\,\frac{2^{n_\xi}-1}{2}.
    \end{align*}
    
    \noindent\textbullet\ for $j=2^i$ ($i=0,\dots,n_\xi-1$),
    \begin{align*}
      c_j
      &= \frac{\xi_{\min}}{N}\cdot 0
         + \frac{\Delta\xi}{N}
           \bigl(-\,N\,2^{i-1}\bigr) \\[0.3ex]
      &= -\,\Delta\xi\,2^{i-1}.
    \end{align*}
    
    \noindent\textbullet\ for $w(j)\ge 2$,
    \begin{align*}
      c_j &= 0 + 0 = 0.
    \end{align*}

These coincide with Eq.~\eqref{eq:WHT_coeff_arith} in Proposition~\ref{prop:arithmetic_WHT}, completing the proof.

\end{document}